\documentclass[12pt]{article}
\usepackage[margin=1in]{geometry}  
\usepackage[singlespacing]{setspace}
\usepackage{epsfig}
\usepackage{graphics,graphicx,color}
\usepackage{amssymb,amsfonts,amsthm,amsmath}
\usepackage{authblk}
\usepackage{multirow}
\usepackage{mathrsfs}
\usepackage{enumerate}
\usepackage{natbib}
\usepackage{float}
\usepackage{algorithm}
\usepackage{algpseudocode}
\usepackage{subcaption} 
\usepackage{url}
\usepackage{bm}
\usepackage{comment}
\usepackage{siunitx}
\usepackage{xcolor}

\usepackage{lscape}
\usepackage{mathtools}
\usepackage{stackengine} 
\usepackage{indentfirst}
 
\usepackage{array,arydshln}

\newtheorem{definition}{Definition}
\newtheorem{Lemma}{Lemma}
\newtheorem{example}{Example} 
\newtheorem{proposition}{Proposition}

\newtheorem{remark}{Remark}
\newtheorem{const}{Construction}

\newcommand{\bA}{\textbf{A}}
\newcommand{\bB}{\textbf{B}}
\newcommand{\bD}{\textbf{D}}

\newcommand{\bE}{\textbf{E}}

\newcommand{\ba}{\textbf{a}}
\newcommand{\bb}{\textbf{b}}
\newcommand{\bd}{\textbf{d}}
\newcommand{\bc}{\textbf{c}}

\newcommand{\bff}{\textbf{f}}

\newcommand{\bzero}{\textbf{0}}

\newcommand{\be}{\textbf{e}}

\usepackage{threeparttable}

\newcommand{\red}{\color{red}}

\newcommand\oast{\stackMath\mathbin{\stackinset{c}{0ex}{c}{0ex}{\ast}{\bigcirc}}}

\title{Grouped Orthogonal Arrays from Orthogonal Arrays and Difference Schemes} 

\author[1]{Meixin Liu}
\author[2]{Chunyan Wang$^*$}
\author[3]{Guanzhou Chen}
\author[1]{C. Devon Lin$^*$}%

\affil[1]{Department of Mathematics and Statistics, Queen's University}
\affil[2]{Center for Applied Statistics and School of Statistics, Renmin University of China}
\affil[3]
{NITFID, LPMC, KLMDASR, School of Statistics and Data Science, Nankai University}
 
\begin{document}

\date{}
\maketitle
\begin{abstract}
Grouped orthogonal arrays were introduced to address experimental design problems arising in computer experiments with grouped inputs, as well as in physical experiments where interactions between factors from different groups are assumed to be negligible. Motivated by the growing need for flexible and efficient designs under such settings, this article develops several  constructions to expand the existing catalogs of grouped orthogonal arrays. The proposed constructions provide a large collection of new grouped orthogonal arrays with significantly larger numbers of groups and group sizes.   

{\bf Key Words}: 
Computer experiment, Fractional factorial,    Space-filling design, Strength, Uncertainty quantification 
\end{abstract}

\section{Introduction}
 
Design of experiments is a fundamental area in statistics and data science with a long and rich history. Experimental designs play a crucial role in supporting effective decision-making in science and engineering   \cite{wu2011experiments,gramacy2020surrogates,lin2025orthogonal,lin2025data,zhou2026fast}. 
The choice of a design typically depends on the underlying data structure and the available prior information about the factors. 
The three fundamental effect principles (sparsity, heredity, hierarchy) can be viewed as an example of \emph{prior} knowledge. 

One emerging direction in the design of experiments is the incorporation of  priori knowledge regarding the relationships among input factors and response variables. In particular, \cite{10.1093/biomet/ass035} motivated this line of research through an engineering application involving a two-step surface-mounting process. In this application, interactions between factors from the two steps were assumed to be negligible, whereas interactions among factors within the same step could be significant. Based on this motivation, \cite{10.1093/biomet/ass035} introduced the concept of \emph{designs of variable resolution} for experiments in which factors are partitioned into disjoint groups, under the assumption that interactions do not occur between factors from different groups.

More recently, \cite{chen2025} and \cite{li2025grouped} extended this line of work from physical experiments to computer experiments, digital twins, and related applications through the development of \emph{grouped orthogonal arrays}. Both works numerically demonstrated that such arrays possess superior space-filling properties and achieve higher prediction accuracy compared with designs obtained by randomly permuting the columns of the corresponding orthogonal arrays.  \cite{chen2025} also argued that grouped orthogonal arrays may be particularly useful in models with blocked (or grouped) additive kernels. 

Several constructions for grouped orthogonal arrays have been proposed. \cite{10.1093/biomet/ass035} and \cite{lekivetz2016designs} developed constructions for two-level designs, while \cite{raaphorst2014construction} considered designs with at most three factors per group and run size $s^3$ for a prime power $s$. The constructions in \cite{zhang2023construction} require run sizes of $s^t$ for an integer $t \geq 4$ and allow only one group with higher strength. More recently,  \cite{chen2025} extended the study to grouped orthogonal arrays with higher within-group strengths and developed several construction methods, most of which are based on regular designs and thus yield primarily prime-power run sizes.  \cite{li2025grouped}  proposed direct and recursive constructions with more flexible run sizes; however, their designs are limited to the case where each group has strength three.


We propose several constructions for grouped orthogonal arrays to expand existing design catalogs. Unlike existing methods, our approaches combine two orthogonal arrays to produce grouped orthogonal arrays with substantially larger run sizes and more groups. Another key contribution is the use of difference schemes of strength $t$ \cite{HEDAYAT1996307}, which, despite being introduced decades ago, have seen limited application in the literature and practice.  Our work shows that they provide an effective tool for constructing grouped orthogonal arrays. In many instances, the proposed constructions produce new grouped orthogonal arrays with larger group sizes  that cannot be obtained through existing methods. 


The remainder of the article is organized as follows. Section 2 gives some notation and definitions.
Sections 3 and 4 provide the constructions using orthogonal arrays and difference schemes of strength $t$, respectively. Section 5  concludes the
article.
  
 \section{Definitions, Notation and Background}\label{sec2}
In this section, we briefly review some necessary definitions, notation, and background in Galois field, difference schemes and some relevant constructions of orthogonal arrays. For more in-depth explanations of these concepts, see \cite{hedayat1999orthogonal} and \cite{chengtang2025}.

Throughout this article, we focus on orthogonal arrays in which each factor has $s$ levels, where $s$ is a  prime power.  Let $\mathbf{D}$ be a design with $N$ runs and $k$ factors of $s$ levels. If for every $N \times t$ submatrix of 
$\mathbf{D}$,  each level  combination appears equally often, then $\mathbf{D}$ is called an {\em orthogonal array} of strength $t$. We use $OA(N,k,s,t)$ to denote such a design.  

 
All computations involving factor levels are performed in the Galois field of order $s$, denoted by $GF(s)$. The set $GF(s)^n$ denotes all $n$-tuples with entries from $GF(s)$. When $s$ is a prime, the elements of $GF(s)$ are ${0,1,\ldots,s-1}$, with addition and multiplication defined modulo $s$.
When $s=p^n$ is a prime power ($n \geq 2$), the elements of $GF(s)$ can be represented as polynomials over $GF(p)$ modulo a primitive polynomial $h(x)$ of degree $n$. In this case, there exists a primitive element $\alpha$ satisfying $h(\alpha)=0$, and every nonzero element of $GF(s)$ can be expressed as a power of $\alpha$. Arithmetic operations in $GF(s)$ are then performed modulo $h(x)$.
For example, consider $GF(2^3)$ with primitive element $x$. One possible primitive polynomial is
$h(x)=x^3+x+1.$
Since $x^3=x+1$ modulo $h(x)$, every element of $GF(2^3)$ can be expressed in terms of powers of $x$, namely,
$
{0,1,x,x^2,x^3=x+1,x^4=x^2+x,x^5=x^2+x+1,x^6=x^2+1}.
$

A  $\mathbf{D} = OA(N,k,s,t)$ is called a {\em grouped orthogonal array} of  $N$ runs, $s$ levels, strength $t$ for $k$ factors if it can be divided into $m$ subarrays,  $\mathbf{D}_i=OA(N, k_i, s, t_i)$, such that $k=k_1+k_2+\cdots+k_m$ and $t_i > t$ for  $i=1,\ldots, m-1$ and $t_m \geq t$ \cite{chen2025}. Such a $\mathbf{D}$ is denoted as $GOA(N, (k_1, k_2, \ldots, k_m), (t_1, t_2, \ldots, t_m),s, t)$. 
If all subarrays $\mathbf{D}_i$'s have the same size $k_0$ and are of the same strength $t_0$, $\mathbf{D}$ is denoted as $GOA(N, k_0 \times m, t_0 \times m, s, t)$. The two notations
may be used in combination. 
Table \ref{table11} presents  
a $GOA(8,(4,3), 3\times 2,2,2)$. The columns of $\mathbf{D}$ can be partitioned into $\mathbf{D}_1=OA(8,4,2,3)$ and $\mathbf{D}_2=OA(8,3,2,3)$.

\renewcommand\arraystretch{0.45}
\setlength{\tabcolsep}{12pt}
\begin{table*}[htpb]
\setlength{\abovecaptionskip}{0.cm}
\setlength{\belowcaptionskip}{0.2cm}
\centering  
\caption{A $GOA(8,(4,3), 3\times 2,2, 2)$\label{table11}}
\begin{tabular}{cccc c ccc}
\hline
\multicolumn{4}{c}{$\mathbf{D}_1$}&&\multicolumn{3}{c}{$\mathbf{D}_2$}\\
\cline{1-4} \cline{6-8}
1&0&0&1&&0&1&1\\  
0&1&0&1&&1&0&1\\ 
0&0&1&1&&1&1&0\\ 
1&1&0&0&&1&1&0\\ 
1&0&1&0&&1&0&1\\ 
0&1&1&0&&0&1&1\\  
1&1&1&1&&0&0&0\\  
0&0&0&0&&0&0&0\\
\hline
\end{tabular}
\end{table*}

Another important concept we will use for constructing grouped orthogonal arrays is difference schemes \cite{seiden1954problem}.   An $r \times c$ matrix with entries from $GF(s)$ is called a {\em difference scheme} if the difference of any two columns contains all elements of $GF(s)$ equally often;  this matrix is commonly denoted by D$(r,c,s)$.  \cite{HEDAYAT1996307} generalized this concept and introduced {\em difference schemes of strength $t$}.  Let $\mathcal{M}^t$ be the set containing all distinct $t$-tuples defined on $GF(s)$. The set $\mathcal{M}^t$ can be partitioned into $s^{t-1}$ disjoint subsets, denoted as $\mathcal{M}_i^t$, for $i=0,\ldots, s^{t-1}-1$, where $\mathcal{M}_0^t$ consists of $s$ arbitrary elements from $\mathcal{M}^t$, and for $i=1,\ldots, s^{t-1}-1$, $\mathcal{M}_i^t$'s  are all distinct cosets of $\mathcal{M}_0^t$.
An $r\times c$ matrix $\mathbf{D}$ defined on $GF(s)$ is called a {\em difference scheme} of strength $t$ if in every $r \times t$ subarray of $\mathbf{D}$,  each coset $\mathcal{M}_0^t, \ldots,\mathcal{M}_{s^{t-1}-1}^t$ is represented equally often when the rows of the subarray are viewed as elements of  $\mathcal{M}^t$.  Such a difference scheme is denoted by   $D_t(r,c,s)$. 
For $t = 2$, this definition   reduces to the usual definition of difference schemes as introduced by \cite{seiden1954problem}. For example,  a  $D_3 (8, 7, 2)$ is 
\[
\scalebox{0.95}{$
\mathbf{D} =
\begin{pmatrix}
0&0&0&0&0&0&0\\  
0&0&1&0&1&1&1\\ 
0&1&0&1&0&1&1\\ 
0&1&1&1&1&0&0\\ 
1&0&0&1&1&0&1\\ 
1&0&1&1&0&1&0\\ 
1&1&0&0&1&1&0\\ 
1&1&1&0&0&0&1
\end{pmatrix}.
$}
\]
That is, among the rows of any $8 \times 3$ subarray of $\mathbf{D}$ are two occurrences of each of $\mathcal{M}_0^3$, $\mathcal{M}_1^3$, $\mathcal{M}_2^3$ and $\mathcal{M}_3^3$, where  $\mathcal{M}_0^3=\{(0,0,0), (1,1,1)\}, \mathcal{M}_1^3=\{(1,0,0), (0,1,1)\}, \mathcal{M}_2^3=\{(0,1,0), (1,0,1)\}$ and $\mathcal{M}_3^3=\{(0,0,1), (1,1,0)\}$.

We now review the operators {\em Kronecker product} and {\em generalized Kronecker product} that will be used in our constructions. Let $\mathbf{A}=[a_{ij}]_{n_1\times r_1}$ be an $n_1\times r_1$ matrix and $\mathbf{B}=[b_{ij}]_{n_2\times r_2}$ be an $n_2\times r_2$ matrix with the elements from $GF(s)$. 
Define an $(n_1n_2)\times (r_1r_2)$ matrix $\mathbf{A}\oplus \mathbf{B}$ to be the Kronecker product of $ \mathbf{A}$ and $\mathbf{B}$,
\begin{align}\label{kronecker}
    \mathbf{A}{ \oplus} \mathbf{B} &= \begin{bmatrix}
        a_{11}+ \mathbf{B} &\ldots&a_{1r_1}+\mathbf{B}\\
        \vdots&\ddots&\vdots\\
        a_{n_11}+\mathbf{B}&\ldots&a_{n_1r_1}+\mathbf{B}
    \end{bmatrix},
\end{align}  
\noindent where $+$ denotes the usual addition operation in Galois field. For an $n_1 \times m_1$ matrix $\bA$ and $n_2 \times m_2$ matrix $\bB$, $n_1 \leq n_2$, and the rows of $\bB$ partitioned into submatrices $\bB_1, ..., \bB_{n_1}$,   the {\em generalized Kronecker product}  \cite{2022He} is defined as 
\begin{equation}\label{eq:gks}
\bA \oast \bB = \bA \oast \left( \begin{array}{c}
     \bB_1  \\
     \vdots \\
     \bB_{n_1}
\end{array}\right) = [\ba_{i} \oplus\bB_{i}]_{1 \leq i \leq n_1}=
  \left(
  \begin{array}{c}
    \ba_{1} \oplus \bB_{1} \\
    \vdots \\
    \ba_{n_1} \oplus\bB_{n_1}
  \end{array}
\right),
\end{equation}
where $\ba_i$ denotes the $i$th row of $\bA$ and the operator $\oplus$ represents the Kronecker product  
given in (\ref{kronecker}).  
Note that $\bA \oast \bB$ depends on the partition of the rows of $\bB$, and it  has the same number of runs as $\bB$,   the number $m_1m_2$ of factors. For example, with addition modulo 3, let $\bA = (0,1,2)^T$ and 
\begin{center}
$\bB = \left(
\begin{array}{r}
\bB_1\\
\bB_2\\
\bB_3
\end{array}
\right)  =\left(\begin{array}{rrrr}
0 & 0 & 0 & 0 \\
1 & 2 & 2 & 0 \\
2 & 1 & 1 & 0 \\
\hdashline
0 & 2 & 1 & 1 \\
1 & 1 & 0 & 1 \\
2 & 0 & 2 & 1 \\
\hdashline
0 & 1 & 2 & 2 \\
1 & 0 & 1 & 2 \\
2 & 2 & 0 & 2
\end{array} 
\right)$, then we have 
$\bA \oast \bB =  \left( \begin{array}{rrrr}
0 & 0 & 0 & 0 \\
1 & 2 & 2 & 0 \\
2 & 1 & 1 & 0 \\

1 & 0 & 2 & 2 \\
2 & 2 & 1 & 2 \\
0 & 1 & 0 & 2 \\

2 & 0 & 1 & 1 \\
0 & 2 & 0 & 1 \\
1 & 1 & 2 & 1
\end{array} 
\right)$.
\end{center}

\section{Construction Using Orthogonal Arrays}\label{sec3_1}

In this section, we present three construction methods for grouped orthogonal arrays with groups of strength three based on orthogonal arrays of smaller sizes. Theoretical results for the proposed constructions are established, and  new grouped orthogonal arrays that cannot be obtained using existing methods are provided.


Construction~\ref{construction1} below uses two arrays
$\bA$ and $\bB$ where $\bA=OA(n_1, m_1, s, 2)$ for $m_1 \ge 2$
and $\bB=OA(n_2, m_2, s, 2)$ for $m_2=2$ or an $OA(n_2, m_2, s, 3)$ for $m_2\geq 3$, where $s \ge 3$ is a prime power. Here, the entries of $\bA$ and $\bB$ are from the Galois field $GF(s)=\{\alpha_0,\ldots,\alpha_{s-1}\}$ with $\alpha_0=0.$ Let $m_1^* = \lceil m_1/2 \rceil$.

\begin{const}\label{construction1}
\  \\ 
The columns of $\bA$ can be partitioned as $\bA = ( \bA^{(1)}, \ldots, \bA^{(m_1^*)})$, where each of $\bA^{(1)}, \ldots, \bA^{{ (\lfloor m_1/2\rfloor)}}$ contains two columns. For $i=1,\ldots, n_1$, let $\ba_i^{(j)}$ denote the $i$th row of each $\bA^{(j)}$.
For $h=1,\ldots,s-1$, and $j=1,\ldots, m_1^*$, define
\begin{equation*}\label{eq:construction3}
\bD_{h}^{(j)}  =  \bA^{(j)} \oast ( \alpha_h * \bB) = \left[ \begin{array}{c}
                                                \ba_{1}^{(j)}  \oplus  ( \alpha_h * \bB)\\
                                                \vdots \\
                                                \ba_{n_1}^{(j)} \oplus  ( \alpha_h * \bB)
                                              \end{array}\right ],
\end{equation*}
and  
\begin{equation*}\label{eq:construction3s}
\bD_{s}  =  \bzero_{n_1} \oast   \bB = \left[ \begin{array}{c}
                                                0 \oplus     \bB\\
                                                \vdots \\
                                                0 \oplus   \bB
\end{array}\right ],
\end{equation*}
\noindent where  $*$  represents the multiplication in a field and $\bzero_{n_1}$ denotes a column vector with $n_1$ zeros.    Now obtain an array
\begin{equation*}\label{eq:E2}
\bE=
    \left [\bD_{1}^{(1)},\bD_2^{(1)}, \ldots,  \bD_{s-1}^{(1)}, \ldots, \bD_{1}^{(m_1^*)},\bD_2^{(m_1^*)}, \ldots,  \bD_{s-1}^{(m_1^*)},\bD_{s} \right ].
\end{equation*} 
 
\noindent If $s$ is odd, let $\bE_{(j-1)((s-1)/2)+f} = \left[ \bD_{2f-1}^{(j)},\bD_{2f}^{(j)} \right]$ for $j=1,\ldots, m_1^*$  and $f = 1, \ldots, (s-1)/2$,  and $\bE_{m_1^*(s-1)/2+ 1} = \bD_{s}$.  If $s$ is even, for $j=1, \ldots, m_1^*$, let $\bE_{ (j-1)(s/2)+f} = \left [ \bD_{2f-1}^{(j)},\bD_{2f}^{(j)} \right]$  and $f =1, \ldots,  s/2-1$,  $\bE_{ (j-1)(s/2)+ s/2} = \bD_{s-1}^{(j)}$, and $\bE_{ m_1^*s/2+ 1} = \bD_{s} $.
\end{const}

Proposition~\ref{proposition1} summarizes the property of $\bE$ in  Construction~\ref{construction1} when $\bB$ is an orthogonal array of strength three.  Results corresponding to $\bB $ of strength two and $m_2 =2 $ can be derived similarly.  Example~\ref{example1} gives an illustration of Construction~\ref{construction1}.

\begin{proposition}
\label{proposition1}
Let $\bA$ be an $OA(n_1, m_1, s, 2)$ for $m_1 \ge 2$ and $\bB$ be 
an $OA(n_2, m_2, s, 3)$ for $m_2\geq 3$, where $s \ge 3$ is a prime power, then we have the following results.
\begin{itemize}
\item[(i)]  If \( s \) is odd and $m_1$ is even, then $\bE$ in  Construction \ref{construction1}   is a  
{\footnotesize 
          \begin{align*}
          GOA\bigg(n_1n_2, \Big(4m_2\times \frac{m_1(s-1)}{4}, m_2\Big), 3 \times \frac{m_1(s-1)+4}{4}, s, 2\bigg);
          \end{align*}
          }
\item[(ii)] If \( s \) is even and $m_1$ is even, then \( \boldsymbol{E} \)  in  Construction \ref{construction1}   is a
{\footnotesize 
        \begin{align*}
          GOA\bigg(n_1n_2, \Big(4m_2 \times \frac{m_1(s-2)}{4}, 2m_2 \times \frac{m_1}{2},m_2 \Big), 3 \times \frac{m_1s+4}{4}, s, 2\bigg);
        \end{align*} 
        }
\item[(iii)]  If \( s \) is odd and $m_1$ is odd, then $\bE$ in  Construction \ref{construction1}   is a  
{\footnotesize 
          \begin{align*}
          GOA\bigg(n_1n_2, \Big(4m_2\times \frac{(m_1-1)(s-1)}{4}, 
          2m_2 \times \frac{s-1}{2},m_2\Big), 3 \times  \frac{(m_1+1)(s-1)+4}{4}, s, 2\bigg);
          \end{align*}
}
\item[(iv)] If \( s \) is even and $m_1$ is odd, then \( \boldsymbol{E} \) in  Construction \ref{construction1}   is a  
 {\footnotesize 
         $$ GOA\bigg(n_1n_2, \Big(4m_2 \times \frac{(m_1-1)(s-2)}{4}, 
          2m_2 \times \frac{m_1+s-3}{2},m_2\times 2\Big), 3 \times  \frac{(m_1+1)s+4}{4}, s, 2\bigg).$$
}
\end{itemize}
\end{proposition}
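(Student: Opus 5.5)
The plan is to verify the two defining requirements of a grouped orthogonal array separately: the whole array $\bE$ is an $OA(n_1n_2,\cdot,s,2)$, and every block $\bE_k$ of Construction~\ref{construction1} is an OA of strength three. The parameters in (i)--(iv) then follow by counting blocks. I will use a single algebraic description of every column. The rows of $\bE$ are indexed by pairs $(i,r)$, with $i$ a row of $\bA$ and $r$ a row of $\bB$. Each column of $\bD_h^{(j)}$ is indexed by a triple $(c,l,h)$, where $c$ is a column of $\bA^{(j)}$ and $l$ is a column of $\bB$; its entry in row $(i,r)$ is $a_{ic}+\alpha_h b_{rl}$. Each column $l$ of $\bD_s$ has entry $b_{rl}$, which can be treated as the formal case "$a=0$, $\alpha=1$".

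Now fix a set $S$ of at most three columns, and let $C$ be the set of columns of $\bA$ and $L$ the set of columns of $\bB$ that $S$ uses. I will treat the restrictions $(a_{ic})_{c\in C}$ and $(b_{rl})_{l\in L}$ as independent variables $u_c$ and $x_l$. Their joint distribution over the $n_1n_2$ rows is uniform on $GF(s)^{|C|+|L|}$ whenever $|C|\le 2$ (strength two of $\bA$) and $|L|\le 3$ (strength three of $\bB$), because $i$ and $r$ vary independently. The entries of the chosen columns are then the linear forms $u_c+\alpha_h x_l$, or $x_l$ for columns of $\bD_s$. A linear image of a uniform vector is uniform exactly when the coefficient vectors are linearly independent over $GF(s)$. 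This reduces everything to a rank check.

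\textbf{Strength two of $\bE$.} Take two columns.
\begin{itemize}
\item If they use different columns of $\bA$, then the two coordinates $u_c,u_{c'}$ alone already make the pair uniform.
\item If they share $c$ but have $l\ne l'$, the two forms involve the distinct variables $x_l,x_{l'}$, so they are independent.
\item If they share $c$ and $l$, then $h\neq h'$. The forms $u+\alpha_h x$ and $u+\alpha_{h'}x$ have determinant $\alpha_{h'}-\alpha_h\ne0$.
\item A pair involving $\bD_s$, namely $u+\alpha_h x_l$ and $x_{l'}$, is independent because of the $u$ coordinate.
\end{itemize}
In all cases we only need $|C|\le2$, which holds. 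This is where $\alpha_h\neq 0$ and the distinctness of the $\alpha_h$ are used.

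\textbf{Strength three within blocks.} A block $[\bD^{(j)}_{h_1},\bD^{(j)}_{h_2}]$ involves at most the two columns $c_1,c_2$ of $\bA^{(j)}$, so $|C|\le2$; three columns use at most three $l$'s, so $|L|\le 3$. The coefficient vectors have the form $e_{c}+\alpha_h e_{l}$ with $h\in\{h_1,h_2\}$, $\alpha_h\ne0$, and the three pairs $(c,l,h)$ distinct. The hardest step, and the one I would write carefully, is showing that any three such vectors are independent. I would argue by cases on how many distinct $l$ occur.
\begin{itemize}
\item If some $l$ appears in only one of the three vectors, that vector is independent of the other two via its $x_l$ coordinate, and the remaining pair is independent by the strength-two argument.
\item Otherwise all three share one $l$. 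The vectors lie in the span of $e_{c_1},e_{c_2},e_l$ and are of the form $(1,0,\alpha)$ or $(0,1,\alpha)$. Three distinct such vectors with $\alpha\in\{\alpha_{h_1},\alpha_{h_2}\}$ give, up to ordering, a determinant $\pm(\alpha_{h_1}-\alpha_{h_2})\neq 0$.
\end{itemize}
The degenerate blocks are handled by the same argument: a block with a single column of $\bA$ (when $m_1$ is odd), the single $\bD^{(j)}_{s-1}$ block (when $s$ is even), and $\bD_s$, which has strength three directly from $\bB$.

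\textbf{Counting.} Finally I would count block sizes: a block with two $\bA$-columns and two $h$'s has $4m_2$ columns, a block with one of each has $2m_2$, and the remaining blocks have $m_2$. For example, in case (iv) the blocks of size $2m_2$ number $(m_1-1)/2+(s-2)/2$, and the blocks of size $m_2$ are $\bD^{(m_1^*)}_{s-1}$ and $\bD_s$, giving $m_2\times2$. The totals of blocks give the stated numbers of groups of strength $3$.
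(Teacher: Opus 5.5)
Your proof is correct, but it argues differently from the paper.

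\textbf{The paper's route.} It gets overall strength two by citing Lemma~\ref{lem:2}. For within-group strength three, it enumerates nine configurations of three columns, according to which columns of $\bA$, which columns of $\bB$, and which multipliers they share. It then matches each configuration to a sufficient condition of Lemma~\ref{lem:3}. Both lemmas are imported from \cite{2022He}.

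\textbf{Your route.} You use the product structure of the row set. A group involves at most two columns of $\bA$ and at most three columns of $\bB$, so $(u_C,x_L)$ is uniform on $GF(s)^{|C|+|L|}$. Every strength check then reduces to linear independence of the coefficient vectors $e_c+\alpha_h e_l$ (or $e_l$ for $\bD_s$). That is settled by a single dichotomy: some $l$ occurs only once, or all three vectors share the same $l$.

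\textbf{What each buys.} The paper's approach reuses ready-made sufficient conditions, so little has to be proved from scratch. Yours is self-contained and handles everything uniformly: cross-group pairs, full blocks, degenerate blocks, and $\bD_s$. It also shows exactly where each hypothesis enters:
\begin{itemize}
\item strength two of $\bA$, through $|C|\le 2$;
\item strength three of $\bB$, through $|L|\le 3$;
\item $\alpha_h\neq 0$ and the distinctness of the $\alpha_h$.
\end{itemize}
In particular, your argument explains why a group may only involve the columns of a single $\bA^{(j)}$. The paper's list also includes triples built on three distinct columns of $\bA$ (its cases (6)--(9)). Such triples never occur inside one group. Some of them, such as case (6), are covered by Lemma~\ref{lem:3} only if $\bA$ has strength three.
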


\begin{example}\label{example1}
Consider the case $s=3$. A $GOA(243,(16\times2,4),3\times3,3,2)$ is obtained from an $\boldsymbol{A} = OA(9,4,3,2)$  and a $\boldsymbol{B} = OA(27,4,3,3)$ following part (i) of Proposition~\ref{proposition1};  if  $\boldsymbol{A} = OA(27,13,3,2)$ instead, we get  a $GOA(729,(16\times6,8,4),3\times 8,3,2)$ by part (iii) of the proposition. 

     \end{example}

Table~\ref{table22} presents several grouped orthogonal arrays with large group sizes constructed using Construction~\ref{construction1}. For run sizes  729, 1024, and 3125, Construction~\ref{construction1} yields grouped orthogonal arrays with group sizes  40, 24, and 24, respectively, whereas the largest group sizes reported in the existing literature \cite{chen2025,li2025grouped} are only 32, 16, and 16, respectively.
This demonstrates the effectiveness of the proposed approach in producing grouped orthogonal arrays with larger group sizes.

 \renewcommand\arraystretch{0.66}
\setlength{\tabcolsep}{7.7pt}
\begin{table}[htpb]
\setlength{\abovecaptionskip}{0.cm}
\setlength{\belowcaptionskip}{0.2cm}
\centering 
\caption{Grouped orthogonal arrays from Construction~\ref{construction1}.\label{table22}}
\begin{tabular}{lll}
\hline
$\bA$ & $\bB$& Constructed grouped orthogonal array\\
\hline
{$OA(9,4,3,2)$} & {$OA(54,5,3,3)$} &$GOA(486, (20 \times 2, 5), 3 \times 3, 3, 2)$\\

{$OA(9,4,3,2)$} & {$OA(81,10,3,3)$} &$GOA(729, (40 \times 2, 10), 3 \times 3, 3, 2)$\\

{$OA(16,5,4,2)$} & {$OA(64,6,4,3)$} &$GOA(1024, (24 \times 2, 12\times 3, 6\times 2), 3 \times 7, 4, 2)$\\

{$OA(25,6,5,2)$} & {$OA(125,6,5,3)$} &$GOA(3125, (24 \times 6, 6), 3 \times 7, 5, 2)$\\

\hline
\end{tabular}
\end{table}

Note that in Construction~\ref{construction1}, $\boldsymbol{B}$ is either a two-column orthogonal array of strength two or an orthogonal array of strength three with more than two columns. We now introduce Construction~\ref{construction2} that works for an orthogonal array $\bB$ of strength two with more than two columns by  repeatedly applying  Construction~\ref{construction1} to two-column blocks of $\boldsymbol{B}$.

\begin{const} \label{construction2}
Let $\bA$ be an $n_1 \times m_1$ matrix, and let $\bB$ be an $OA(n_2, m_2, s, 2),$ where $m_2=2f.$
Partition the 
columns of 
$\bB$ as $(\bB_1,\ldots,\bB_f)$, where each $\bB_i$ consists of two columns. For $i=1,\ldots,f$, apply Construction~\ref{construction1} to $\bA$ and $\bB_i$ to obtain an array $\bE_i$. Let $\bE$ denote the array obtained by concatenating all columns of $\bE_1,\ldots,\bE_f$.
\end{const}

Proposition~\ref{proposition2} summarizes the property of $\bE$ in  Construction~\ref{construction2}  and Example~\ref{example2} illustrates the construction procedure.

\begin{proposition}
\label{proposition2}
Let $\bA$ be an $OA(n_1,m_1,s,2)$ with $m_1 \ge 2$ and $\bB$ be an $OA(n_2,m_2,s,2)$ where $m_2=2f,$
then we have the following result.
\begin{itemize}
\item[(i)]  If \( s \) is odd and $m_1$ is even, then $\bE$ in Construction \ref{construction2}    is a 
{\footnotesize 
          \begin{align*}
          GOA\bigg(n_1n_2, \Big(8 \times \frac{fm_1(s-1)}{4}, m_2 \Big), \Big (3 \times  \frac{fm_1(s-1)}{4},2\Big), s, 2\bigg);
          \end{align*} 
          }
\item[(ii)] If \( s \) is even and $m_1$ is even, then \( \boldsymbol{E} \)  in Construction \ref{construction2}  is a  
  {\footnotesize       \begin{align*}
          GOA\bigg(n_1n_2, \Big( 8 \times \frac{fm_1(s-2)}{4}, 4 \times \frac{fm_1}{2},m_2\Big), \Big(3 \times  \frac{fm_1s}{4},2\Big), s, 2\bigg);
        \end{align*}  
        }
\item[(iii)]  If \( s \) is odd and $m_1$ is odd, then $\bE$ in Construction \ref{construction2}    is a 
{\footnotesize 
          \begin{align*}
          GOA\bigg(n_1n_2, \Big(8 \times \frac{f(m_1-1)(s-1)}{4}, 
          4 \times \frac{f(s-1)}{2},m_2\Big), \Big(3 \times  \frac{f(m_1+1)(s-1)}{4},2\Big ), s, 2\bigg);
          \end{align*} 
}
\item[(iv)] If \( s \) is even and $m_1$ is odd, then \( \boldsymbol{E} \)  in Construction \ref{construction2}   is a  
{\footnotesize  
         $$ \hspace{-0.5cm} GOA\bigg(n_1n_2, \Big(8 \times \frac{f(m_1-1)(s-2)}{4}, 
          4 \times \frac{f(m_1+s-3)}{2},2m_2\Big), \Big(3 \times  \frac{f((m_1+1)s-4)}{4},2\Big), s, 2\bigg).$$
}
 
\end{itemize}
\end{proposition}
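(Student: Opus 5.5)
Reading Proposition~\ref{proposition2} off the displayed parameters, each group $\bE_{\cdot}$ coming from a block $\bB_i$ is a pair $[\bD^{(j)}_{2g-1},\bD^{(j)}_{2g}]$ with $2\cdot 2\cdot 2=8$ columns, or a single $\bD^{(j)}_{s-1}$ with $4$ columns. The last group $\bD_s$ collects all $\bzero_{n_1}\oast\bB_i$, hence equals $\bzero_{n_1}\oast\bB$ with $m_2$ columns and strength $2$. So the plan is to study each $\bE_i$ through the proof of Proposition~\ref{proposition1}, with $\bB$ replaced by the two-column array $\bB_i=OA(n_2,2,s,2)$. Then I show that the concatenation $\bE$ is still an $OA(n_1n_2,\cdot,s,2)$. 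Finally I count groups in the four parity cases.

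\textbf{Step 1 (strength $2$ of $\bE$).} Index a run by $(r,u)$, where $r$ is the run of $\bA$ and $u$ is the run of $\bB$. A column of $\bE$ has the form $a_{r,c}+\alpha_h b_{u,\ell}$, or $b_{u,\ell}$ for $\bD_s$. Take two columns $a_{r,c}+\alpha_h b_{u,\ell}$ and $a_{r,c'}+\alpha_{h'}b_{u,\ell'}$.
\begin{itemize}
\item If $c\neq c'$: for each fixed $u$, strength $2$ of $\bA$ makes the pair uniform over $r$.
\item If $c=c'$ and $\ell\neq\ell'$: for each fixed $r$, strength $2$ of $\bB$ (whose two columns may lie in different blocks $\bB_i$) gives uniformity over $u$, because $(\alpha_h,\alpha_{h'})$ are nonzero scalars.
\item If $c=c'$ and $\ell=\ell'$ but $h\neq h'$: the map $(x,y)\mapsto(x+\alpha_h y,\,x+\alpha_{h'}y)$ is invertible, and $(a_{r,c},b_{u,\ell})$ is uniform on $GF(s)^2$.
\item Pairs involving $\bD_s$ are handled in the same way.
\end{itemize}
This is essentially the argument already used for Proposition~\ref{proposition1}.

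\textbf{Step 2 (strength $3$ within each 8-column or 4-column group).} Inside a group $[\bD^{(j)}_{h},\bD^{(j)}_{h'}]$ built from $\bB_i=(b_1,b_2)$, the columns are $a_c+\alpha_k b_\ell$ with $c\in\{c_1,c_2\}$ (the two columns of $\bA^{(j)}$), $k\in\{h,h'\}$ and $\ell\in\{1,2\}$. The key observation is that $(a_{c_1},a_{c_2},b_1,b_2)$ is uniform on $GF(s)^4$ over the $n_1n_2$ runs, since $\bA^{(j)}$ and $\bB_i$ are each full factorials in two factors. Every column is therefore a linear form in four independent uniform variables. Any three columns are jointly uniform exactly when their coefficient vectors in $GF(s)^4$ are linearly independent. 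Each vector is $e_c+\alpha_k e_{2+\ell}$, and a short case analysis shows that three distinct such vectors are independent:
\begin{itemize}
\item If the three involve at least two distinct $b_\ell$ and two distinct $a_c$, the supports force independence.
\item If all three share the same $\ell$, they would need two different $c$ together with distinct multipliers. For example, $e_1+\alpha_h e_3$, $e_1+\alpha_{h'}e_3$ and $e_2+\cdot$ are independent because $\alpha_h\neq\alpha_{h'}$.
\end{itemize}
The 4-column group $\bD^{(j)}_{s-1}$, and the groups for odd $m_1$ where $\bA^{(m_1^*)}$ has one column, are sub-cases of this argument. Strength exactly $3$ and not $4$ follows from the dependency $(a_{c}+\alpha_h b_1)-(a_c+\alpha_{h'}b_1)$ versus $b_1$; this is not needed for the GOA definition anyway. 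The $\bD_s$ group has strength only $2$, which is why it is listed as the final group with strength $2$.

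\textbf{Step 3 (counting).} Each $\bE_i$ contributes groups exactly as in Proposition~\ref{proposition1} with $m_2=2$, excluding its $\bD_s$ part. For $s$ odd and $m_1$ even, this gives $m_1^*(s-1)/2=m_1(s-1)/4$ eight-column groups per $i$, so $f m_1(s-1)/4$ in total. The other parities follow similarly. For $m_1$ odd, the single-column $\bA^{(m_1^*)}$ gives 4-column groups, and for $s$ even the leftover $\bD_{s-1}$ gives 4-column groups; for $s$ even and $m_1$ odd the two combine into the stated $2$-column pieces. I will then check these counts against the stated totals $3\times(\cdot)$.

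I expect the main obstacle to be the bookkeeping in Step 3, especially case (iv), where the sizes $4$ and $2m_2$ must be reconciled with the construction. The substantive step, Step 2, is clean once the uniformity of $(a_{c_1},a_{c_2},b_1,b_2)$ is observed.
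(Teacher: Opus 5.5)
Your proof is correct, but it reaches the result by a different route from the paper's. The paper's proof just defers to the proof of Proposition~\ref{proposition1}. There, overall strength two is credited to Lemma~\ref{lem:2}, and within-group strength three to a list of nine configurations of three Kronecker-sum columns, each matched to a sufficient condition of Lemma~\ref{lem:3}. The group counts are left implicit.

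You replace that case-matching with one observation. Inside a group built from $\bA^{(j)}$ and $\bB_i$, the vector $(a_{c_1},a_{c_2},b_1,b_2)$ is uniform on $GF(s)^4$ over the $n_1n_2$ runs. So every column is a linear form, and strength three becomes a rank condition on any three of the vectors $e_c+\alpha_k e_{2+\ell}$. This is self-contained and shows exactly where $\alpha_k\neq 0$ and $\alpha_h\neq\alpha_{h'}$ are used. It would also cover Proposition~\ref{proposition1}, where two $\bA$-columns and three $\bB$-columns are jointly uniform on $GF(s)^5$. The paper's route is shorter only because it leans on an existing lemma. Your Step~3 also supplies the counting the paper omits, and once carried out it matches all four parts.

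Three small repairs are needed.
\begin{itemize}
\item Your two bullets in Step~2 miss the case of three columns with the same $\bA$-column but both columns of $\bB_i$, e.g.\ $e_1+\alpha_he_3$, $e_1+\alpha_{h'}e_3$, $e_1+\alpha_he_4$. A uniform argument covers everything. In a nontrivial vanishing combination, any vector in the support that is alone in its $c$-class, or alone in its $\ell$-class, must have coefficient $0$. Two distinct vectors sharing both $c$ and $\ell$ are independent because $\alpha_h\neq\alpha_{h'}$. Three vectors cannot share both $c$ and $\ell$, since a group has only two multipliers.
\item Your witness for ``not strength four'' uses $b_1$, which is not a column of the group. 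A genuine witness is $(e_1+\alpha_he_3)-(e_1+\alpha_{h'}e_3)-(e_2+\alpha_he_3)+(e_2+\alpha_{h'}e_3)=0$. This point is irrelevant to the statement anyway.
\item The case (iv) bookkeeping you flag resolves as follows. Each $\bE_i$ leaves two 2-column pieces, $\ba_{m_1}\oplus\alpha_{s-1}\bB_i$ and $\bzero_{n_1}\oplus\bB_i$, which cannot be strength-three groups on their own. Pooling them over $i=1,\ldots,f$ gives the final group of $4f=2m_2$ columns. Its strength two follows from your Step~1, because it is a subarray of $\bE$.
\end{itemize}
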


\begin{example}\label{example2}
Using $\boldsymbol{A} = OA(27,13,3,2)$ and $\boldsymbol{B} = OA(9,4,3,2)$, Construction~\ref{construction2} gives  a $GOA(243,(8\times12,4\times2,4),(3\times14,2),3,2)$; and a $GOA(625,(8\times18,6),(3\times18,2),5,2)$ is obtained from an $\boldsymbol{A}= OA(25,6,5,2)$  and a $\boldsymbol{B}= OA(25,6,5,2)$. 
\end{example}

\begin{remark}\label{remark1}
We present Construction~\ref{construction2} and Proposition~\ref{proposition2} under the assumption that $m_2$ is even for simplicity of exposition. However, the idea of the construction also applies when $m_2$ is odd. In this case, let $\bB_{f+1}=\boldsymbol{b}_{m_2}$ denote the last column of $\bB$, and apply Construction~\ref{construction1} to $\bA$ and $\bB_{f+1}$ to obtain a grouped orthogonal array $\bE_{f+1}$. Compared with each $\bE_i$, $i=1,\ldots,f$, the array $\bE_{f+1}$ has the same number of groups, but each group contains only half as many columns.
For example, let $\bA$ be an $OA(25,6,5,2)$ and $\bB$ be an $OA(125,31,5,2)$. Partition the first 30 columns of $\bB$ into $(\bB_1,\ldots,\bB_{15})$, where each $\bB_i$ consists of two columns. For $i=1,\ldots,15$, applying Construction~\ref{construction1} to $\bA$ and $\bB_i$ yields an array $\bE_i = GOA(3125,(8\times 6,2),(3\times 6,2),5,2)$. By Proposition~\ref{construction2}, concatenating $(\bE_1,\ldots,\bE_{15})$ gives a $GOA(3125,(8\times 90,30),(3\times 90,2),5,2)$.
Next, let $\bB_{16}=\boldsymbol{b}_{31}$ and apply Construction~\ref{construction1} to $\bA$ and $\bB_{16}$, producing a grouped orthogonal array $\bE_{16}= GOA(3125,(4\times 6,1),(3\times 6,1),5,2)$. Concatenating the columns of $\bE_1,\ldots,\bE_{16}$ then yields a $GOA(3125,(8\times 90,4\times 6,31),(3\times 96,2),5,2)$.
\end{remark}

\renewcommand\arraystretch{0.66}
\setlength{\tabcolsep}{4.5pt}
\begin{table}[htpb]
\setlength{\abovecaptionskip}{0.cm}
\setlength{\belowcaptionskip}{0.2cm}
\centering 
\begin{threeparttable}
\caption{Grouped orthogonal arrays from Construction~\ref{construction2}.\label{table33}}
\begin{tabular}{lll}
\hline
$\bA$ & $\bB$& {Constructed grouped orthogonal Array}\\
\hline

    {$OA(18,7,3,2)$} & {$OA(9,4,3,2)$} &
 $GOA(162, (8 \times 6, 4\times 2,4), (3 \times 8,2), 3, 2)^\dag$\\

    {$OA(27,13,3,2)$} & {$OA(9,4,3,2)$} &
 $GOA(243, (8 \times 12, 4\times 2,4), (3 \times 14,2), 3, 2)$\\
 
     {$OA(54,25,3,2)$} & {$OA(9,4,3,2)$} &
 $GOA(486, (8 \times 24, 4\times 2,4), (3 \times 26,2), 3, 2)^\dag$\\
 
  {$OA(9,4,3,2)$} & {$OA(81,40,3,2)$} &
 $GOA(729, (8 \times 40,40), (3 \times 40,2), 3, 2)$\\

  {$OA(25,6,5,2)$} & {$OA(25,6,5,2)$} &
 $GOA(625, (8 \times 18,6), (3 \times 18,2), 5, 2)$\\

   {$OA(25,6,5,2)$} & {$OA(125,30,5,2)$} &
 $GOA(3125, (8 \times 90,30), (3 \times 90,2), 5, 2)^\dag$\\
    {$OA(25,6,5,2)$} & {$OA(125,31,5,2)$} &
 $GOA(3125,(8\times 90,4\times 6,31),(3\times 96,2),5,2)^{\dag *}$\\
 \hline
\end{tabular}
\begin{tablenotes}
\scriptsize{
\item {$*$ The corresponding grouped orthogonal array is obtained as described in Remark \ref{remark1}.\\
 $\dag$ The corresponding grouped orthogonal array is new and cannot be
generated by existing methods}}

\end{tablenotes}
\end{threeparttable}
\end{table}

Table~\ref{table33} summarizes some grouped orthogonal arrays generated by Construction~\ref{construction2}. Some of these arrays, marked by $\dag$, are new and cannot be obtained using existing methods. Although existing methods can generate grouped orthogonal arrays with the same design parameters (run size, number of groups, and group sizes) as some of the remaining arrays, the resulting designs are not necessarily isomorphic to ours.


We now present Construction~\ref{construction3}, which repeatedly applies Construction~\ref{construction1} using  $\boldsymbol{B}$ that is composed of strength-three orthogonal array blocks, i.e., $\boldsymbol{B}$ is a grouped orthogonal array consisting of strength-three groups.

\begin{const} \label{construction3}
Let $\bA$ be an $n_1 \times m_1$ matrix, and $\bB$ be a $GOA(n_2, m_2 \times f, 3 \times f, s, 2)$.
Partition $\bB$ as $\bB = (\bB_1,\ldots,\bB_f)$
where each $\bB_i$ is an $OA(n_2, m_2, s, 3)$. For $i=1,\ldots,f$, apply Construction~\ref{construction1} to $(\bA,\bB_i)$ to obtain an array $\bE_i$. Let $\bE$ denote the array obtained by concatenating all columns of $\bE_1,\ldots,\bE_f$.
\end{const}

Proposition~\ref{proposition3} summarizes the property of $\bE$ in  Construction~\ref{construction3}  and Example~\ref{example3} illustrates the procedure of the construction. 

\begin{proposition}
\label{proposition3}
Let $\bA$ be an $n_1 \times m_1$ matrix and $\bB$ be a $GOA(n_2, m_2 \times f, 3 \times f, s, 2)$, then we have the following results.
\begin{itemize}
\item[(i)]  If \( s \) is odd and $m_1$ is even, then $\bE$ in  Construction \ref{construction3}   is a 
{\footnotesize
          \begin{align*}
          GOA\bigg(n_1n_2, \Big( 4m_2 \times \frac{fm_1(s-1)}{4}, m_2\times f \Big), 3 \times  \frac{f(m_1(s-1)+4)}{4}, s, 2\bigg);
          \end{align*}\vspace*{-0.6in}  }
\item[(ii)] If \( s \) is even and $m_1$ is even, then \( \boldsymbol{E} \)  in  Construction \ref{construction3}  is a  
{\footnotesize 
        \begin{align*}
          GOA\bigg(n_1n_2, \Big( 4m_2 \times \frac{fm_1(s-2)}{4}, 2m_2 \times \frac{fm_1}{2},m_2\times f \Big), 3 \times  \frac{f(m_1s+4)}{4}, s, 2\bigg);
        \end{align*}  
        }
\item[(iii)]  If \( s \) is odd and $m_1$ is odd, then $\bE$ in  Construction \ref{construction3}   is a  
{\footnotesize 
         $$\hspace{-1cm}   GOA\bigg(n_1n_2, \Big( 4m_2 \times \frac{f(m_1-1)(s-1)}{4}, 
          2m_2 \times \frac{f(s-1)}{2},m_2\times f\Big), 3 \times \frac{f((m_1+1)(s-1)+4)}{4}, s, 2\bigg);$$
}
\item[(iv)] If \( s \) is even and $m_1$ is odd, then \( \boldsymbol{E} \) in  Construction \ref{construction3}   is a  
{\footnotesize 
         $$\hspace{-1.1cm} GOA \bigg(n_1n_2, \Big(4m_2 \times \frac{f(m_1-1)(s-2)}{4}, 
          2m_2 \times \frac{f(m_1+s-3)}{2},m_2\times 2f \Big), 3 \times  \frac{f((m_1+1)s+4)}{4}, s, 2\bigg).$$
}
\end{itemize}
\end{proposition}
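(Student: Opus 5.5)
The plan is to reduce Proposition~\ref{proposition3} to Proposition~\ref{proposition1} applied blockwise, and then to add one cross-block check for overall strength two. Each $\bB_i$ is an $OA(n_2,m_2,s,3)$. So, provided $\bA$ is an $OA(n_1,m_1,s,2)$ with $m_1\ge 2$ (implicitly assumed, since Construction~\ref{construction1} needs it), Proposition~\ref{proposition1} applies to each pair $(\bA,\bB_i)$. It shows that $\bE_i$ is a GOA with the group structure listed in the relevant case (i)--(iv), and that every group has strength three. Concatenating $\bE_1,\ldots,\bE_f$ multiplies the number of groups of each size by $f$. This reproduces the stated group sizes and the strength vector $3\times f(\cdot)$ in every case, e.g.\ $3\times f(m_1(s-1)+4)/4$ in case (i). Within-group strength three is inherited directly, because each group of $\bE$ is a group of some $\bE_i$.

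What remains is to show that the whole $\bE$ is an $OA(n_1n_2,\cdot,s,2)$. Pairs of columns within one $\bE_i$ are already handled by Proposition~\ref{proposition1}, so only pairs with one column in $\bE_i$ and one in $\bE_{i'}$, $i\ne i'$, need checking. Every column of $\bE_i$ has one of two forms:
\begin{itemize}
\item[(a)] $a\oplus(\alpha_h b)$ from $\bD_h^{(j)}$, i.e.\ entry $a_{r}+\alpha_h b_{u}$ in the run indexed by row $r$ of $\bA$ and row $u$ of $\bB$, where $a$ is a column of $\bA$, $b$ a column of $\bB_i$, and $\alpha_h\ne 0$;
\item[(b)] $0\oplus b$ from $\bD_s$.
\end{itemize}
Take two such columns, built from $(a,\alpha_h,b)$ and $(a',\alpha_{h'},b')$, where $b\in\bB_i$ and $b'\in\bB_{i'}$ are distinct columns of $\bB$.

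The key step is to exploit that $\bB$ has overall strength two, so $(b,b')$ is uniformly distributed over the $n_2$ rows of $\bB$. Fix a row $r$ of $\bA$. As $u$ ranges over the rows of $\bB$, the pair $(a_r+\alpha_h b_u,\ a'_r+\alpha_{h'} b'_u)$ is an affine image of $(b_u,b'_u)$ whose linear part is the invertible diagonal map $\mathrm{diag}(\alpha_h,\alpha_{h'})$. Hence each of the $s^2$ level pairs appears exactly $n_2/s^2$ times for every fixed $r$, and therefore $n_1n_2/s^2$ times overall. This holds regardless of which columns $a,a'$ of $\bA$ are involved, including $a=a'$. If either column is of type (b), the same argument works with translation $0$ and coefficient $1$.

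The argument needs no property of $\bA$ beyond what Proposition~\ref{proposition1} already uses, so I expect no serious obstacle. The only real care is bookkeeping: matching the group counts in each of the four parity cases after multiplying by $f$. In case (iv) in particular, the $m_2\times 2$ groups of Proposition~\ref{proposition1} become $m_2\times 2f$.
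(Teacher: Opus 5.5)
Your proposal is correct and follows the paper's route: the paper also reduces Proposition~\ref{proposition3} to Proposition~\ref{proposition1} applied to each pair $(\bA,\bB_i)$. Your cross-block check, which uses the overall strength two of $\bB$ (in effect Lemma~\ref{lem:2}(i)), supplies the step the paper leaves implicit, and you are right that $\bA$ must be taken to be an $OA(n_1,m_1,s,2)$ with $m_1\ge 2$, a hypothesis the statement omits.
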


\begin{example}\label{example3}
A $GOA(243,(16\times6,4\times3),3\times9,3,2)$ is obtained from an $\boldsymbol{A} = OA(9,4,3,2)$ and a $\boldsymbol{B}= GOA(27,4\times 3,3\times 3,3,2)$. 

\end{example}

\renewcommand\arraystretch{0.6}
\setlength{\tabcolsep}{2.6pt}
\begin{table}[htpb]
\setlength{\abovecaptionskip}{0.cm}
\setlength{\belowcaptionskip}{0.2cm}
\centering 
\small
\caption{Grouped orthogonal arrays from Construction~\ref{construction3}.\label{table44}}
\begin{tabular}{lll}
\hline
$\bA$ & $\bB$& {Constructed grouped orthogonal array}\\
\hline
 {$OA(9,4,3,2)$} & {$GOA(27,4\times 3,3\times 3,3,2)$} &
 $GOA(243, (16 \times 6,4\times 3), 3 \times 9, 3, 2)$\\

  {$OA(18,7,3,2)$} & {$GOA(27,4\times 3,3\times 3,3,2)$} &
 $GOA(486, (16 \times 9, 8\times 3, 4\times 3), 3 \times 15, 3, 2)$\\

   {$OA(27,13,3,2)$} & {$GOA(27,4\times 3,3\times 3,3,2)$} &
 $GOA(729, (16 \times 18, 8\times 3, 4\times 3), 3 \times 24, 3, 2)$\\

 {$OA(9,4,3,2)$} & {$GOA(81,8\times 4,3\times 4,3,2)$} &
 $GOA(729, (32 \times 8, 8\times 4), 3 \times 12, 3, 2)$\\

 {$OA(25,6,5,2)$} & {$GOA(125,4\times 6,3\times 6,5,2)$} &
 $GOA(3125, (16 \times 36, 4\times 6), 3 \times 42, 5, 2)$\\
 \hline
\end{tabular}
\end{table}

Table~\ref{table44} summarizes some grouped orthogonal arrays obtained using Construction~\ref{construction3}. Notably, all of these grouped orthogonal arrays are new and cannot be generated by existing methods. 


\section{Construction Using Difference Schemes}\label{sec3_2}

Constructions~\ref{construction1}--\ref{construction3}, together with the methods of \cite{chen2025} and \cite{li2025grouped}, provide several approaches for constructing grouped orthogonal arrays with group strength three. However, these methods neither extend to group strengths of four or higher nor produce grouped orthogonal arrays with overall strength three. In general, they do not yield designs with group strength greater than three and overall strength two. The only known exception is a construction of \cite{chen2025}, which produces regular grouped orthogonal arrays with group strength at least four and overall strength two.
In this section, we introduce new constructions based on difference schemes of strength three or higher. The proposed Constructions~\ref{construction4}--\ref{construction6} further extend the framework by enabling the construction of grouped orthogonal arrays with higher within-group strength and overall strength two or three.

Construction~\ref{construction4} below combines an orthogonal array of strength $t_1 \geq 2$ with a difference scheme of strength $t_2 \geq 3$ to construct grouped orthogonal arrays with within-group strength $t_2$. The property of designs in the construction is summarized in Proposition~\ref{proposition4} and an example of the construction is given in Example~\ref{example4}.


\begin{const}\label{construction4}
    Let \( \boldsymbol{A} = (\boldsymbol{a}_1, \dots, \boldsymbol{a}_{m_1})\) be an  \( OA(n_1, m_1, s, t_1) \) with \( t_1  = 2 \) or \( t_1  = 3 \), and let \( \boldsymbol{B} \) be a \( D_{t_2}(n_2, m_2, s) \) with \(t_2 \geq 3 \), where \( \boldsymbol{a}_i \)  denotes the $i$th column of $\boldsymbol{A}$.  For $i = 1, \dots, m_1$,  let $ \boldsymbol{E}_{i} = \boldsymbol{a}_{i} \oplus \boldsymbol{B}$
\noindent and define
    \begin{align}\label{eq:construction4}
        \boldsymbol{E} =   
(            \boldsymbol{E}_1,   \ldots,  \ \boldsymbol{E}_{m_1})  . 
    \end{align}     
\end{const}

\begin{proposition}
\label{proposition4}
 Let \( \boldsymbol{A} \) be an  \( OA(n_1, m_1, s, t_1) \) with \( t_1  = 2 \) or \( t_1  = 3 \) and let \( \boldsymbol{B} \) be a \( D_{t_2}(n_2, m_2, s) \) with  \( t_2 \geq 3 \). Then \( \boldsymbol{E} \) in (\ref{eq:construction4}) is a \( GOA(n_1n_2, m_2 \times m_1, t_2 \times m_1, s, t_1) \).  
\end{proposition}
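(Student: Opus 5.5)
The plan is to verify directly the two defining properties of a grouped orthogonal array. First, each block $\bE_i=\ba_i\oplus\bB$ is an $OA(n_1n_2,m_2,s,t_2)$. Second, the whole array $\bE$ has strength $t_1$. Since $t_2\ge 3\ge t_1$, these two facts give $\bE= GOA(n_1n_2, m_2\times m_1, t_2\times m_1, s, t_1)$.

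Throughout I index the rows of $\bE$ by pairs $(r,u)$ with $1\le r\le n_1$ and $1\le u\le n_2$. The entry in column $(i,j)$ is then $a_{ri}+b_{uj}$. I will use the standard choice in \cite{HEDAYAT1996307}, as in the $D_3(8,7,2)$ example, that $\mathcal{M}_0^t$ is the diagonal subgroup $\{(x,\ldots,x):x\in GF(s)\}$ of $GF(s)^t$. Its cosets are then the classes of $t$-tuples modulo adding a constant vector $x\mathbf{1}$. I also record an easy preliminary fact: a $D_{t_2}(n_2,m_2,s)$ is a $D_{t}(n_2,m_2,s)$ for every $2\le t\le t_2$. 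This holds because each coset of the diagonal in $GF(s)^t$ pulls back, under projection, to exactly $s^{t_2-t}$ cosets of the diagonal in $GF(s)^{t_2}$.

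\textbf{Within-group strength.} Fix $i$ and $t_2$ columns $J$ of $\bB$, and a target $\mathbf v\in GF(s)^{t_2}$. A row $(r,u)$ hits $\mathbf v$ exactly when $\bb_u|_J=\mathbf v-a_{ri}\mathbf 1$. As $r$ varies, $a_{ri}$ takes each value of $GF(s)$ exactly $n_1/s$ times, and as $x$ ranges over $GF(s)$ the vectors $\mathbf v-x\mathbf 1$ run once through the coset $C$ containing $\mathbf v$. Hence the count is $(n_1/s)\cdot\#\{u:\bb_u|_J\in C\}$. By the difference-scheme property this equals $(n_1/s)(n_2/s^{t_2-1})=n_1n_2/s^{t_2}$, which does not depend on $\mathbf v$.

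\textbf{Overall strength $t_1$.} This is the main step, because chosen columns may share a group, so the $\bA$-parts of their entries are not independent. Take $t_1$ columns of $\bE$ and group them by their block index: distinct blocks $i_1,\ldots,i_g$ with $g\le t_1$, and column sets $J_1,\ldots,J_g$ of $\bB$ with $\sum|J_l|=t_1$. Fix targets $\mathbf v_l\in GF(s)^{|J_l|}$.
\begin{itemize}
\item[(a)] For fixed $u$, the row $(r,u)$ matches exactly when $a_{ri_l}=x_l$ for all $l$, for the unique scalars $x_l$ with $\mathbf v_l-\bb_u|_{J_l}=x_l\mathbf 1$, provided such scalars exist.
\item[(b)] Since $\bA$ has strength $t_1\ge g$, each such $u$ contributes $n_1/s^g$ rows.
\item[(c)] It remains to count the $u$ for which $\bb_u|_{J}\in \prod_l(\mathbf v_l+\mathrm{diag}_l)$, where $J=\cup_l J_l$. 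The product set $\prod_l \mathrm{diag}_l$ is a subgroup of order $s^g$ containing the full diagonal of $GF(s)^{t_1}$. So its translate is a union of $s^{g-1}$ cosets of $\mathcal M^{t_1}_0$.
\item[(d)] Because $\bB$ is a $D_{t_1}$ by the preliminary fact (using $t_1\le t_2$), each of these cosets occurs $n_2/s^{t_1-1}$ times.
\end{itemize}
Multiplying, the total count is $\frac{n_1}{s^g}\cdot s^{g-1}\cdot\frac{n_2}{s^{t_1-1}}=\frac{n_1n_2}{s^{t_1}}$, independent of the targets. This gives strength $t_1$ and completes the argument.
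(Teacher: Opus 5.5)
Your within-group computation is correct; it is a direct proof of the paper's Lemma~\ref{lemDt}. Your preliminary fact that a $D_{t_2}$ is also a $D_t$ for $2\le t\le t_2$ is something the paper uses tacitly when it applies Lemmas~\ref{lem5} and~\ref{conststr3} to $\bB$. The overall-strength step, however, has a genuine gap. In (c)--(d) you treat $J=\cup_l J_l$ as $t_1$ distinct columns of $\bB$, but the $J_l$ need not be disjoint: the columns $(i_1,j)$ and $(i_2,j)$ of $\bE$, with $i_1\neq i_2$, both come from $\bb_j$. Then the vector $(\bb_u|_{J_1},\ldots,\bb_u|_{J_g})\in GF(s)^{t_1}$ has repeated coordinates and is not a row of any $n_2\times t_1$ subarray of $\bB$, so (d) is false. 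For the columns $(i_1,j),(i_2,j)$, the vector $(b_{uj},b_{uj})$ always lies in $\mathcal{M}_0^2$. For $(i_1,j_1),(i_1,j_2),(i_2,j_1)$, the vector $(b_{uj_1},b_{uj_2},b_{uj_1})$ meets only one of the $s$ cosets of $\mathcal{M}_0^3$ in your product set, $n_2/s$ times, rather than each of them $n_2/s^2$ times. The totals happen to agree in these cases. A clear symptom of the gap, though, is that your argument never uses $t_1\le 3$, only $t_1\le t_2$, yet the conclusion fails for $t_1=4$. The columns $(i_1,j_1),(i_1,j_2),(i_2,j_1),(i_2,j_2)$ of $\bE$ satisfy $\mathrm{col}_1-\mathrm{col}_2-\mathrm{col}_3+\mathrm{col}_4=\bzero$. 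Correspondingly, your $u$-count there is $n_2/s$ or $0$ depending on the targets, instead of the claimed $n_2/s^2$.

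The repair is short, and it is exactly where $t_1\in\{2,3\}$ enters. The condition in (a) only prescribes differences of coordinates within each $J_l$, and singleton sets impose nothing. Since $\sum_l |J_l|=t_1\le 3$, at most one set, say $J_{l^*}$, has size $k\ge 2$. That set consists of $k=t_1-g+1$ distinct columns of $\bB$, because its columns lie in a single block. Hence the number of admissible $u$ is $n_2$ when $g=t_1$. Otherwise it is $\#\{u:\bb_u|_{J_{l^*}}\in \mathbf v_{l^*}+\mathcal{M}_0^{k}\}=n_2/s^{k-1}=n_2/s^{t_1-g}$ by your preliminary fact. Combined with (b), this gives $n_1n_2/s^{t_1}$. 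So amended, your argument is a self-contained elementary alternative to the paper's proof. The paper instead notes $\bE=\bA\oplus\bB$ and cites Lemma~\ref{lem5} for $t_1=2$ and Lemma~\ref{conststr3} for $t_1=3$; those results already cover the overlapping column choices your counting missed.
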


\begin{example}\label{example4}
Let \( \boldsymbol{A} \)  be an \( OA(27, 13, 3, 2) \) and let \( \boldsymbol{B} \) be a   \( D_4(27, 5, 3) \).
Then a \( GOA(729, 5 \times 13, 4 \times 13, 3, 2) \) can be obtained by Construction~\ref{construction4}.  
\end{example}

Construction~\ref{construction4} employs orthogonal arrays of strength two or three together with difference schemes of strength $t_2 \ge 3$. The former can be obtained from websites such as \cite{sloane} and \cite{EendebakSchoen} (see also the \texttt{oapackage} by \cite{Eendebak2019}). The latter can be constructed using various methods introduced in \cite{HEDAYAT1996307}. We implement these methods and provide a catalog of difference schemes of strength three and higher, which is in the website \url{https://github.com/devonlin/Grouped-Orthogonal-Arrays-and-Difference-Schemes-of-Higher-Strength}.  Table \ref{table55} summarizes some grouped orthogonal arrays generated by Construction~\ref{construction4}.  Most of these designs are new because they achieve within-group strength 4. While \cite{chen2025} can  produce grouped orthogonal arrays with larger group sizes or more groups in a few cases, those constructions are limited to regular designs. In contrast, our method can generate both regular and nonregular designs.

\renewcommand\arraystretch{0.6}
\setlength{\tabcolsep}{13pt}
\begin{table}[htpb]
\setlength{\abovecaptionskip}{0.cm}
\setlength{\belowcaptionskip}{0.2cm}
\centering  
\caption{\small {Grouped orthogonal arrays from Construction \ref{construction4}}.\label{table55}}
\begin{tabular}{lll}
\hline
\textbf{A} & \textbf{B} & {Constructed grouped orthogonal array}  \\
\hline
$OA(9, 4, 3, 2)$ & $D_4(27, 5, 3)$& $GOA(243, 5\times 4, 4\times 4, 3, 2)$\\
  & $D_4(81, 9, 3)$& $GOA(729, 9\times 4, 4\times 4, 3, 2)$\\
$OA(27, 4, 3, 3)$ & $D_4(27, 5, 3)$& $GOA(729, 5\times 4, 4\times 4, 3, 3)$\\
  & $D_4(81, 9, 3)$& $GOA(2187, 9\times 4, 4\times 4, 3, 3)$ \\
$OA(27, 13, 3, 2)$ & $D_4(27, 5, 3)$& $GOA(729, 5\times 13, 4\times 13, 3, 2)$\\
&$D_4(81, 9, 3)$& $GOA(2187, 9\times 13, 4\times 13, 3, 2)$\\

$OA(16, 5, 4, 2)$ & $D_4(64, 5, 4)$& $GOA(1024, 5\times 5, 4\times 5, 4, 2)$\\
 & $D_4(256, 10, 4)$& $GOA(4096, 10\times 5, 4\times 5, 4, 2)$\\
  $OA(32, 9, 4, 2)$ & $D_4(64, 5, 4)$& $GOA(2048, 5\times 9, 4\times 9, 4, 2)$\\
 & $D_4(256, 10, 4)$& $GOA(8192, 10\times 9, 4\times 9, 4, 2)$\\
$OA(64, 6, 4, 3)$ & $D_4(64, 5, 4)$& $GOA(4096, 5\times 6, 4\times 6, 4, 3)$\\
$OA(25, 6, 5, 2)$& $D_4(125, 5, 5)$& $GOA(3125, 5\times 6, 4\times 6, 5, 2)$\\
 $OA(50, 11, 5, 2)$&  $D_4(125, 5, 5)$& $GOA(6250, 5\times 11, 4\times 11, 5, 2)$\\
 \hline
\end{tabular}
\end{table}

In Construction~\ref{construction4}, grouped orthogonal arrays are constructed from an orthogonal array and a difference scheme, where the within-group strength is determined by the difference scheme and the overall strength is determined by the orthogonal array. We next develop a construction in which the within-group strength is determined by the orthogonal array, while the overall strength depends on the difference scheme.

\begin{const}\label{construction5}
Let \( \boldsymbol{A} = (\boldsymbol{a}_1, \dots, \boldsymbol{a}_{m_1})\) and \( \boldsymbol{B} = (\boldsymbol{b}_1, \dots, \boldsymbol{b}_{m_2})\) with $m_2 > 2$, where \( \boldsymbol{a}_i \)  and  \( \boldsymbol{b}_i  \) denote the \( i \)th column of \( \boldsymbol{A} \) and \( \boldsymbol{B} \), respectively. Let $g= \hbox{min}\{m_1,m_2-1\}$. For $ j = 1, \dots, g$,  define   
$\boldsymbol{E}_{j} = (\boldsymbol{A} \oplus \boldsymbol{b}_{j}, \boldsymbol{a}_{j} \oplus \boldsymbol{b}_{m_2}),$  and obtain
    \begin{align}
        \boldsymbol{E} =  
            (\boldsymbol{E}_1,  \ldots, \boldsymbol{E}_{g})
    .\label{eq:construction5}
    \end{align}  
\end{const}

\begin{proposition}\label{proposition5}
Let \( \boldsymbol{A}\) be an \( OA(n_1, m_1, s, t_1) \) with $t_1\geq 3$ and  \( \boldsymbol{B} \) be a \( D_{t_2}(n_2, m_2, s) \) with 
$t_2=2$ or $t_2=3.$
Then $\boldsymbol{E}$ in \eqref{eq:construction5} is a \( GOA(n_1n_2, (m_1+1) \times g, t_1 \times g, s, t_2)\).
Moreover, if $m_1=t_1$, then $\boldsymbol{E}$ in  \eqref{eq:construction5} is a \( GOA(n_1n_2, (m_1+1) \times g, (t_1+1) \times g, s, t_2)\). 
\end{proposition}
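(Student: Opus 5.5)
The plan is to write every column of $\boldsymbol{E}$ in a uniform way and then check the within-group and overall strengths separately. Index the rows of $\boldsymbol{E}$ by pairs $(u,v)$, with $u$ a row of $\boldsymbol{A}$ and $v$ a row of $\boldsymbol{B}$. Every column of $\boldsymbol{E}$ then has the form $\boldsymbol{a}_i\oplus\boldsymbol{b}_k$, with entries $a_{ui}+b_{vk}$, for a label $(i,k)\in\{1,\ldots,m_1\}\times\{1,\ldots,m_2\}$. Group $j$ consists of the labels $(i,j)$, $i=1,\ldots,m_1$, together with $(j,m_2)$. Since $j\le g\le m_2-1$, all labels in $\boldsymbol{E}$ are distinct, both within and across groups. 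Each group therefore has $m_1+1$ columns, and there are $g$ groups.

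\emph{Within-group strength.} Take $t_1$ columns of group $j$.
\begin{itemize}
\item[(a)] If they all come from $\boldsymbol{A}\oplus\boldsymbol{b}_j$, then for each fixed $v$ they are $t_1$ columns of $\boldsymbol{A}$ shifted by the constant $b_{vj}$. They are uniform for each $v$, hence uniform overall.
\item[(b)] If they include $\boldsymbol{a}_j\oplus\boldsymbol{b}_{m_2}$ but not $\boldsymbol{a}_j\oplus\boldsymbol{b}_j$, the underlying $\boldsymbol{A}$-columns are $t_1$ distinct columns, and the same argument applies.
\item[(c)] If they include both $\boldsymbol{a}_j\oplus\boldsymbol{b}_j$ and $\boldsymbol{a}_j\oplus\boldsymbol{b}_{m_2}$, write $a_{uj}+b_{v,m_2}=(a_{uj}+b_{vj})+d_v$ with $d_v=b_{v,m_2}-b_{vj}$. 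For fixed $v$, the other $t_1-1$ coordinates are uniform over $GF(s)^{t_1-1}$, each tuple occurring $n_1/s^{t_1-1}$ times, and the last coordinate is determined by the offset $d_v$. Because $\boldsymbol{B}$ is a difference scheme of strength at least two, $d_v$ takes each value $n_2/s$ times. Counting occurrences of a target tuple gives $(n_1/s^{t_1-1})(n_2/s)=n_1n_2/s^{t_1}$.
\end{itemize}
When $m_1=t_1$, the same counting applied to all $m_1+1$ columns gives $n_1n_2/s^{t_1+1}$, which is the strength-$(t_1+1)$ claim.

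\emph{Overall strength $t_2$.} Here I will use additive characters $\chi$ of $GF(s)$. A set of $t\le t_2$ columns with labels $(i_r,k_r)$ is uniform iff, for every nonzero $\lambda\in GF(s)^t$, the sum $\sum_{u,v}\chi\big(\sum_r\lambda_r(a_{ui_r}+b_{vk_r})\big)$ vanishes. This sum factors as
\begin{equation*}
\Big[\sum_u\chi\big(\textstyle\sum_i\mu_i a_{ui}\big)\Big]\Big[\sum_v\chi\big(\textstyle\sum_k\nu_k b_{vk}\big)\Big],
\end{equation*}
where $\mu_i=\sum_{r:i_r=i}\lambda_r$ and $\nu_k=\sum_{r:k_r=k}\lambda_r$.
\begin{itemize}
\item[(a)] If some $\mu_i\neq0$, the $\boldsymbol{A}$-factor vanishes, because at most $t\le t_1$ columns of $\boldsymbol{A}$ are involved.
\item[(b)] Otherwise every $\mu_i=0$, so $\sum_k\nu_k=0$. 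By the definition of $D_{t_2}$, taking $\mathcal{M}_0^t$ to be the diagonal $\{(x,\ldots,x)\}$ as in \cite{HEDAYAT1996307}, the rows of any $t_2$ columns of $\boldsymbol{B}$ are uniform modulo the diagonal. This means every nontrivial character that is trivial on the diagonal, i.e.\ one with $\nu\neq0$ and $\sum\nu_k=0$, averages to zero over $v$. So it suffices to show $\nu\ne0$.
\end{itemize}

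The main obstacle is exactly this last step, and it is where $t_2\le3$ is needed. Restrict to the indices $r$ with $\lambda_r\ne0$. Since every $\mu_i=0$, each $i$-class of these indices has size at least two. With $t\le3$, there is therefore a single class, of size two or three. All its labels share the same $i$, so their $k_r$ are distinct because labels are distinct. Hence $\nu_{k_r}=\lambda_r\neq0$, and $\nu\ne0$. For $t=4$, cancellations such as labels $(1,1),(1,2),(2,1),(2,2)$ with $\lambda=(1,-1,-1,1)$ would break the argument, which explains the restriction on $t_2$. Combining the two parts gives $\boldsymbol{E}$ as a $GOA(n_1n_2,(m_1+1)\times g,t_1\times g,s,t_2)$, with $t_1$ replaced by $t_1+1$ when $m_1=t_1$.
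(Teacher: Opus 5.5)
Your proof is correct. For the within-group strength you follow the paper's case split exactly. Your (a) is its Case~I, and (b) and (c) are the two subcases of its Case~II, split by whether $\ba_j$ appears twice. The paper writes $\bE_j=[\ba_1,\ldots,\ba_{m_1},\ba_j]\biguplus[\bb_j,\ldots,\bb_j,\bb_{m_2}]$ and cites Lemma~\ref{lem7} for (b), and Lemma~\ref{lemt+1} for (c) and for $m_1=t_1$; you do the same counting by hand.

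The genuine difference is in the overall strength. The paper notes that the columns of $\bE$ are distinct columns of $\bA\oplus\bB$. It then invokes Lemma~\ref{lem5} for $t_2=2$ and Lemma~\ref{conststr3} for $t_2=3$. You instead prove uniformity of any $t\le 3$ distinct columns $\ba_i\oplus\bb_k$ directly, by factoring the character sum. In effect this re-derives both cited lemmas in one argument.

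The paper's route is shorter but rests entirely on external results. Yours is self-contained and makes explicit that $\mathcal{M}_0^t$ must be the diagonal subgroup; the paper's definition says only ``$s$ arbitrary elements,'' though its example uses the diagonal. It also pinpoints exactly where $t_2\le 3$ and $t_1\ge t_2$ are used.

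Your $t=4$ cancellation example shows the restriction on $t_2$ is not an artifact of the method. The labels $(1,1),(1,2),(2,1),(2,2)$ are columns of $\bE$: the first and third lie in group 1, the others in group 2. They satisfy $(\ba_1\oplus\bb_1)-(\ba_1\oplus\bb_2)-(\ba_2\oplus\bb_1)+(\ba_2\oplus\bb_2)=\bzero$ identically. Hence $\bE$ never has overall strength four once $g\ge 2$.
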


Proposition~\ref{proposition5} shows that, in general, $\boldsymbol{E}$ is a grouped orthogonal array consisting of $g$ groups, each of strength $t_1$. Moreover, when $m_1 = t_1$, i.e., $\boldsymbol{A}$ is a full factorial, each $\boldsymbol{E}_g$ becomes a full factorial  and thus of strength $t_1+1$. Example~\ref{example5} provides an illustration of Construction~\ref{construction5}.

\begin{example}\label{example5}
Let $\boldsymbol{A} = OA(81, 5, 3, 4)$ and $\boldsymbol{B} = D_3(9, 4, 3)$. Then $t_1 = 4$, $t_2 = 3$, and $g = \min\{m_1, m_2 - 1\} = 3$. By Proposition~\ref{proposition5}, we obtain a GOA(729,6×3,4×3,3,3).
Let $\boldsymbol{A} = OA(27, 3, 3, 3)$ and $\boldsymbol{B} = D(3, 3, 3)$. Then $t_1 = m_1 = 3$, $m_2 > t_2 = 2$, and $g = 2$. By Proposition~\ref{proposition5}, we obtain a GOA(81,4×2,4×2,3,2).
\end{example}

Table~\ref{table66} summarizes some grouped orthogonal arrays constructed via Construction~\ref{construction5}.   Comparing Table~\ref{table55} with Table~\ref{table66}, we observe that they yield different numbers of groups and group sizes, while Construction~\ref{construction5} offers greater flexibility in the run sizes of the resulting grouped orthogonal arrays because difference schemes exist for the run sizes that are not multiple of $s^3$.  For example,  Construction~\ref{construction5} can provide grouped orthogonal arrays of strength-four groups with 162 runs while 
Construction~\ref{construction4} cannot. 

 \renewcommand\arraystretch{0.6}
\setlength{\tabcolsep}{14pt}
\begin{table}[htpb]
\setlength{\abovecaptionskip}{0.cm}
\setlength{\belowcaptionskip}{0.2cm}
\centering  
\caption{\small {Grouped orthogonal arrays from Construction \ref{construction5}}.\label{table66}}
\begin{tabular}{lll}
\hline
\textbf{A} & \textbf{B} & {Constructed grouped orthogonal array}  \\
\hline
$OA(27, 3, 3, 3)$ & $D(3, 3, 3)$& $GOA(81, 4\times 2, 4\times 2, 3, 2)$\\
& $D(6, 6, 3)$& $GOA(162, 4\times 3, 4\times 3, 3, 2)$\\
& $D_3(9, 4, 3)$& $GOA(243, 4\times 3, 4\times 3, 3, 3)$\\
 
 $OA(81, 5, 3, 4)$ & $D_3(9, 4, 3)$& $GOA(729, 6\times 3, 4\times 3, 3, 3)$\\
& $D_3(18, 5, 3)$& $GOA(1458, 6\times 4, 4\times 4, 3, 3)$\\
& $D_3(27, 9, 3)$& $GOA(2187, 6\times 5, 4\times 5, 3, 3)$\\

{$OA(256, 5, 4, 4)$} & $D(4, 4, 4)$& $GOA(1024, 6\times 3, 4\times 3, 4, 2)$\\
& $D(8, 8, 4)$& $GOA(2048, 6\times 5, 4\times 5, 4, 2)$\\
& $D_3(16, 6, 4)$& $GOA(4096, 6\times 5, 4\times 5, 4, 3)$\\
 
$OA(625, 6, 5, 4)$ & $D(5, 5, 5)$& $GOA(3125, 7\times 4, 4\times 4, 5, 2)$\\
 & $D(10, 10, 5)$& $GOA(6250, 7\times 6, 4\times 6, 5, 2)$\\
 & $D_3(25, 6, 5)$& $GOA(15625, 7\times 5, 4\times 5, 5, 3)$\\
 \hline
\end{tabular}
\end{table}

Next, we present Construction~\ref{construction6}, a recursive method for constructing grouped orthogonal arrays with groups of strength three and overall strength two. This construction is based on smaller grouped orthogonal arrays with groups of strength three together with difference schemes of strength three. Unlike the recursive method in Construction~\ref{construction3}, which is based on orthogonal arrays, Construction~\ref{construction6} employs difference schemes. The idea of constructing strength-three orthogonal arrays from smaller arrays using difference schemes of strength three was first introduced by \cite{mukhopadhyay1981construction}; here, we extend this approach to grouped orthogonal arrays. Construction~\ref{construction6} differs from the recursive constructions in \cite{chen2025} and \cite{li2025grouped} in that \cite{chen2025} employs difference schemes of strength two in a different manner, whereas \cite{li2025grouped} does not use difference schemes at all.

\begin{const}\label{construction6}
Let $ \boldsymbol{A}= (\boldsymbol{A}_1, \ldots, \boldsymbol{A}_g)$ be a $ GOA(N, (m_1, \dots, m_g), 3\times g, s, 2)$ and  $\boldsymbol{B}$ be a $D_3(r, c, s),$  where  $\boldsymbol{A}_j$ is the $j$th group of $\boldsymbol{A}$. For \( j = 1, 2, \ldots, g \),  define \( \boldsymbol{E}_j = \boldsymbol{A}_j \oplus \boldsymbol{B} \) and obtain    
    \begin{align}\label{eq3}
        \boldsymbol{E} =(\boldsymbol{E}_1, \dots, \boldsymbol{E}_g). 
        \end{align}
\end{const}

Proposition~\ref{proposition6} summarizes the property of $\bE$ in  Construction~\ref{construction6}  and Example~\ref{example6} illustrates the  construction.

\begin{proposition}\label{proposition6}
Let $ \boldsymbol{A}$ be a $ GOA(N, (m_1, \dots, m_g), 3\times g, s, 2)$  and  $\boldsymbol{B}$ be a $D_3(r, c, s)$.
Then  \( \boldsymbol{E} \) in (\ref{eq3}) is a \( GOA(Nr, (cm_1, \dots, cm_g), 3 \times g, s, 2) \).
\end{proposition}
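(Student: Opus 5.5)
The proof splits into two claims. First, each $\bE_j=\bA_j\oplus\bB$ is an $OA(Nr, cm_j, s, 3)$. Second, the full array $\bE$ has strength two. The run size $Nr$ and the column count $cm_j$ per group come directly from the definition of the Kronecker product. A column of $\bE_j$ has the form $\ba\oplus\bb$, where $\ba$ is a column of $\bA_j$ and $\bb$ is a column of $\bB$. Its entry in row $(u,v)$ is $a_u+b_v$, with $u$ indexing the runs of $\bA$ and $v$ the rows of $\bB$.

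\emph{Strength three within a group.} I take three columns $\ba_{1}\oplus\bb_{1},\ \ba_{2}\oplus\bb_{2},\ \ba_{3}\oplus\bb_{3}$ of $\bE_j$; the $\bb$'s need not be distinct, and neither need the $\ba$'s, though the pairs are distinct.

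When the $\ba$'s are distinct, I fix $v$. The row $(b_{v1},b_{v2},b_{v3})$ is then a constant shift, and $\bA_j$ has strength three, so as $u$ varies the triple $(a_{u1}+b_{v1},\dots)$ covers $GF(s)^3$ uniformly. Summing over $v$ gives uniformity.

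When the $\ba$'s coincide, the $\bb$'s are three distinct columns of $\bB$. The entries are $a_u\mathbf{1}+(b_{v1},b_{v2},b_{v3})$. By the definition of $D_3$, the rows $(b_{v1},b_{v2},b_{v3})$ hit each coset of $\mathcal{M}_0^3=\{x\mathbf{1}\}$ equally often, here meaning cosets of the diagonal subgroup, which is the natural choice of Hedayat et al. Since $a_u$ is uniform on $GF(s)$, adding $a_u\mathbf{1}$ makes the result uniform on each coset, and the triple is therefore uniform on $GF(s)^3$.

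The mixed case is two equal $\ba$'s and one different, say $\ba_1=\ba_2\neq\ba_3$ with $\bb_1\neq\bb_2$. The triple is $(a_{u1}+b_{v1},\,a_{u1}+b_{v2},\,a_{u3}+b_{v3})$. Because $\bA_j$ has strength $3\ge2$, the pair $(a_{u1},a_{u3})$ is uniform on $GF(s)^2$ for each fixed $v$. Hence, for fixed $v$, the triple is uniform on the set $\{(x+b_{v1},\,x+b_{v2},\,y)\}$, which is determined by the difference $b_{v2}-b_{v1}$. A $D_3$ is in particular a $D_2$ (this follows by projecting the coset condition), so $b_{v2}-b_{v1}$ is uniform over $v$. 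Combining the two facts gives uniformity on $GF(s)^3$.

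I expect the main obstacle to be the case bookkeeping: stating carefully the coset interpretation of $D_3$, and showing that it implies the ordinary $D_2$ property on any two columns.

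\emph{Overall strength two.} Two columns lying in the same group are covered by the first claim. For columns $\ba\oplus\bb$ and $\ba'\oplus\bb'$ from different groups, the columns $\ba,\ba'$ come from different groups of $\bA$, and $\bA$ has overall strength two, so $(a_u,a'_u)$ is uniform on $GF(s)^2$. For each fixed $v$ the pair of entries is this uniform pair shifted by a constant, so it is uniform; summing over $v$ gives uniformity. The argument does not depend on whether $\bb=\bb'$.

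Together the two claims show that $\bE$ is a $GOA(Nr,(cm_1,\dots,cm_g),3\times g,s,2)$.
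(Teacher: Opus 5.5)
Your proof is correct. It has the same two-part skeleton as the paper's proof (within-group strength three, then overall strength two), but where the paper cites two lemmas, you verify each part by direct counting.

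For the groups, the paper invokes Lemma~\ref{conststr3}, Mukhopadhyay's theorem that an $OA$ of strength three Kronecker-summed with a $D_3$ has strength three. Your three-way split by which $\ba$-columns coincide is in effect a proof of that lemma in this setting, and it shows where each hypothesis enters. No property of $\bB$ is used when the three $\ba$'s are distinct. Only the two-column $D_2$ projection is used in the mixed case; that projection needs a third column of $\bB$, i.e.\ $c\ge 3$, which is implicit in calling $\bB$ a $D_3$. The full coset condition is used only when all three columns share the same column of $\bA_j$.

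For overall strength two, the paper applies Lemma~\ref{lem5} to all of $\bA\oplus\bB$, which silently uses the fact that a $D_3$ is an ordinary difference scheme. You instead note that same-group pairs are already covered by strength three. Cross-group pairs involve distinct columns of $\bA$, so they need only the overall strength two of $\bA$ and nothing about $\bB$.

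The paper's version is shorter and relies on the literature. Yours is self-contained, makes the $D_3\Rightarrow D_2$ step explicit, and correctly fixes $\mathcal{M}_0^3$ as the diagonal subgroup. The paper's wording (``$s$ arbitrary elements'') needs that reading for ``cosets'' to make sense, and it matches the paper's $D_3(8,7,2)$ example.
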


\begin{example}\label{example6}
Let $\bA$ be a $GOA(27,4\times3,3\times3,3,2)$ and let $\bB$ be a $D_3(27,9,3)$. Then Construction~\ref{construction6} yields a $GOA(729,36\times3,3\times3,3,2)$. As  pointed in the discussion for Construction~\ref{construction1}, for run size 729, the largest group size reported in the existing literature is 32 \cite{li2025grouped}. Construction~\ref{construction1} can generate a grouped orthogonal array  with two groups of size 40, whereas Construction~\ref{construction6} generates a grouped orthogonal array with three groups of size 36.
\end{example}

\renewcommand\arraystretch{0.7}
\setlength{\tabcolsep}{3.25pt}
\begin{table}[htpb]
\setlength{\abovecaptionskip}{0.cm}
\setlength{\belowcaptionskip}{0.2cm}
\centering  
\small
\caption{\small {Grouped orthogonal arrays from Construction \ref{construction6}}.\label{table77}}
\begin{tabular}{lll}
\hline
\textbf{A} & \textbf{B} & {Constructed grouped orthogonal array}  \\
\hline
$GOA(27, 4\times 3, 3\times 3, 3, 2)$& $D_3(18, 5, 3)$& $GOA(486, 20\times 3, 3\times 3, 3, 2)$\\
& $D_3(27, 9, 3)$& $GOA(729, 36\times 3, 3\times 3, 3, 2)$\\

 {$GOA(81, 10\times 4, 3\times 4, 3, 2)$}& $D_3(18, 5, 3)$&{ $GOA(1458, 50\times 4, 3\times 4, 3, 2)$}\\
& $D_3(27, 9, 3)$& { $GOA(2187, 90\times 4, 3\times 4, 3, 2)$}\\

$GOA(162, (8\times 6, 4\times 2), 3\times 8, 3, 2)$& $D_3(9, 4, 3)$& $GOA(1458, (32\times 6, 16\times 2), 3\times 8, 3, 2)$\\
& $D_3(18, 5, 3)$& $GOA(2916, (40\times 6, 20\times 2), 3\times 8, 3, 2)$\\
 
 
$GOA(64, 4\times 5, 3\times 5, 4, 2)$& $D_3(16, 6, 4)$& $GOA(1024, 24\times 5, 3\times 5, 4, 2)$\\
& $D_3(64, 16, 4)$& $GOA(4096, 64\times 5, 3\times 5, 4, 2)$\\

$GOA(125, 4\times 6, 3\times 6, 5, 2)$ & $D_3(25, 6, 5)$& $GOA(3125, 24\times 6, 3\times 6, 5, 2)$\\
 \hline
\end{tabular}
\end{table}

Table~\ref{table77} summarizes some grouped orthogonal arrays constructed via Construction~\ref{construction6}.  For $GOA(486, 20\times 3, 3\times 3, 3, 2)$, a grouped orthogonal array with the same design parameters can be constructed via \cite{chen2025}. All other grouped orthogonal arrays are new.

\renewcommand\arraystretch{0.8}
\setlength{\tabcolsep}{4.5pt}
\begin{table}[htpb]
\scalebox{0.93}{
\setlength{\abovecaptionskip}{0.cm}
\setlength{\belowcaptionskip}{0.2cm}
\centering \scriptsize
\begin{threeparttable}
\caption{\small {Comparisons with \cite{chen2025} and \cite{li2025grouped}.\label{table88}}}
\begin{tabular}{llll}
\hline
$N$&The proposed GOA& \cite{chen2025}&\cite{li2025grouped}\\
\hline
162
&$GOA(162, (8 \times 6, 4\times 2,4), (3 \times 8,2), 3, 2)^{C_2}$&{ $GOA(162, 8\times 3, 3 \times 3, 3, 2)^\ddagger$}&$GOA(162, 8 \times 6, 3 \times 6, 3, 2)$\\
&$GOA(162, 4\times 3, 4\times 3, 3, 2)^{C_5}$&&\\
&&&\\
243
& $GOA(243, (8 \times 12, 4\times 2,4), (3 \times 14,2), 3, 2)^{C_2}$& { $GOA(243, 6 \times 20, 5 \times 20, 3, 2)$}&$GOA(243,16\times 4, 3\times 4,3,2)$\\
&$GOA(243, (16 \times 6,4\times 3), 3 \times 9, 3, 2)^{C_3}$&{ $GOA(243, 7 \times 17, 4 \times 17, 3, 2)$}&\\
&$GOA(243, 5\times 4, 4\times 4, 3, 2)^{C_4}$&{ $GOA(243, 8 \times 15, 4\times 15, 3, 2)$}&\\
&$GOA(243, 4\times 3, 4\times 3, 3, 3)^{C_5}$&{ $GOA(243, 9 \times 13, 3 \times 13, 3, 2)$}&\\

&&&\\
486
&   
 $GOA(486, (8 \times 24, 4\times 2,4), (3 \times 26,2), 3, 2)^{C_2}$
 &{ GOA$(486,8\times 9, 3\times 9,3,2)^\ddagger$}&$GOA(486, 16\times 6, 3 \times 6, 3, 2)$\\
&$GOA(486, (16 \times 9, 8\times 3, 4\times 3), 3 \times 15, 3, 2)^{C_3}$&{ GOA$(486,20\times 3, 3\times 3,3,2)^\ddagger$}&\\
&$GOA(486, (20 \times 2, 5), 3 \times 3, 3, 2)^{C_1}$&&\\
&{ $GOA(486, 20\times 3, 3\times 3, 3, 2)^{C_6}$}&&\\

&&&\\
729
 & $GOA(729, (8 \times 40,40), (3 \times 40,2), 3, 2)^{C_2}$&{ $GOA(729, 7 \times 52, 6 \times 52, 3, 2)$}&$GOA(729, 32 \times 8, 3 \times 8, 3, 2)$\\
 &$GOA(729, (16 \times 18, 8\times 3, 4\times 3), 3 \times 24, 3, 2)^{C_3}$&{ $GOA(729, 8 \times 45, 5 \times 45, 3, 2)$}&\\
 &$GOA(729, (32 \times 8, 8\times 4), 3 \times 12, 3, 2)^{C_3}$&{ $GOA(729, 9 \times 40, 4 \times 40, 3, 2)$}&\\
 &$GOA(729, 36\times 3, 3\times 3, 3, 2)^{C_6}$&{ $GOA(729, 10 \times 36, 4 \times 36, 3, 2)$}&\\
 &$GOA(729, (40 \times 2, 10), 3 \times 3, 3, 2)^{C_1}$&&\\
 &$GOA(729, 9\times 4, 4\times 4, 3, 2)^{C_4}$&&\\&$GOA(729, 5\times 13, 4\times 13, 3, 2)^{C_4}$&&\\
&$GOA(729, 5\times 4, 4\times 4, 3, 3)^{C_4}$&&\\
&$GOA(729, 6\times 3, 4\times 3, 3, 3)^{C_5}$&&\\

&&&\\
625&$GOA(625, (8 \times 18,6), (3 \times 18,2), 5, 2)^{C_2}$&{ $GOA(625, 5 \times 31, 4 \times 31, 5, 2)$}&$GOA(625,8\times 12, 3\times 12, 5,2)$\\
&&{ $GOA(625, 6 \times 26, 3 \times 26, 5, 2)$}&\\
&&{ $GOA(625, 7 \times 22, 3 \times 22, 5, 2)$}&\\
&&{ $GOA(625, 8 \times 19, 3 \times 19, 5, 2)$}&\\
     &&&\\

 3125& GOA$(3125,(8\times 90,4\times 6,31),(3\times 96,2),5,2)^{C_2}$ & { $GOA(3125, (12 \times 12, 6), 3 \times 13, 5, 2)^\ddagger$}&$GOA(3125,16\times 24, 3\times 24,5,2)$\\
 &$GOA(3125, (16 \times 36, 4\times 6), 3 \times 42, 5, 2)^{C_3}$&&\\
 &$GOA(3125, (24 \times 6, 6), 3 \times 7, 5, 2)^{C_1}$&&\\
 &$GOA(3125, 7\times 4, 4\times 4, 5, 2)^{C_5}$&&\\

 \hline
\end{tabular}
\begin{tablenotes}
\scriptsize{
\item {$C_1$–$C_6$ indicate that the corresponding grouped orthogonal arrays are obtained from Constructions~\ref{construction1}–\ref{construction6}, respectively.
$\ddagger$ indicates those designs may be either regular or nonregular. Other arrays in the column of \cite{chen2025} are regular.}}
\end{tablenotes}
\end{threeparttable}}
\end{table}

 \section{Conclusions and Discussion}

We propose six construction methods for grouped orthogonal arrays. Constructions~\ref{construction1}--\ref{construction3} generate grouped orthogonal arrays by combining two orthogonal arrays, resulting in designs with within-group strength three.  Constructions~\ref{construction4}--\ref{construction6} employ difference schemes and orthogonal arrays of higher strength, enabling the construction of grouped orthogonal arrays with higher within-group and/or overall strength. Overall, the proposed methods substantially expand the class of available grouped orthogonal arrays by allowing more flexible run sizes and group sizes, and/or higher attainable strengths than existing approaches. For simplicity of presentation and proof, we use the Kronecker product in Constructions~\ref{construction4}--\ref{construction6}; however, the generalized Kronecker product can also be applied. Collectively, the proposed constructions provide a large pool of grouped orthogonal arrays, from which appropriate optimal designs for analysis can be selected using design optimality criteria.

Table~\ref{table88} compares our constructions with those in \cite{chen2025} and \cite{li2025grouped}, and these results demonstrate the effectiveness, flexibility, and broad applicability of our proposed construction methods. We highlight three key advantages of our approaches.

First, for grouped designs with within-group strength three and overall strength two, our method can produce more groups under the same run size and grouping structure. For example, given a run size of 486 and a group size of 16, the $GOA(486, (16 \times 9, 8\times 3, 4\times 3), 3 \times 15, 3, 2)$ from Construction \ref{construction3} can accommodate 9 groups of size 16, whereas the $GOA(486, 16\times 6, 3 \times 6, 3, 2)$ from \cite{li2025grouped} can accommodate only 6 such groups. Similarly, for a run size of 486 and a group size of 8, the  $GOA(486, (8 \times 24, 4\times 2,4), (3 \times 26,2), 3, 2)$ from Construction \ref{construction2} yields 24 groups of size 8, while the $GOA(486, 8 \times 9, 3 \times 9, 3, 2)$ from \cite{chen2025} produces only 9 such groups.

Second, our methods can construct grouped orthogonal arrays with larger group sizes.  For instance, for a run size of 3125, Construction~\ref{construction1} yields a grouped orthogonal array $GOA(3125, (24 \times 6, 6), 3 \times 7, 5, 2)$ with the group size 24, whereas the largest group sizes reported in 
\cite{chen2025} and \cite{li2025grouped}  are 12 and 16, respectively.

Third, our construction methods can generate grouped orthogonal arrays with higher within-group and/or overall strength. For example, Construction~\ref{construction5} yields the arrays $GOA(243,$ $ 4\times 3, 4\times 3, 3, 3)$ and $GOA(729, 6\times 3, 4\times 3, 3, 3)$, both of which have within-group strength four and overall strength three. Grouped orthogonal arrays with these structures cannot be obtained using existing methods. 

Several directions merit further investigation. The present work focuses on the construction of grouped orthogonal arrays without explicitly addressing their statistical optimality. Future research may examine the optimality properties of these designs under criteria such as generalized minimum aberration  and minimum moment criteria \cite{deng1999generalized, wu2001generalized,xu2003minimum}. The proposed constructions may be extended to mixed-level cases \cite{pang2021construction, wang2025small}. Developing systematic construction methods for mixed-level grouped orthogonal arrays would substantially broaden the scope of applicability. In this context, it may be worthwhile to investigate the impact of level asymmetry on aliasing structures and projection properties. Another direction is to extend grouped orthogonal arrays to incorporate different data structures involving branching factors and nested factors \cite{zhang2026bayesian}.

\section*{Appendix}
 
To prove the propositions, we first state the following lemmas and definitions. Definition 1 below defines an operator between two matrices $\bA$ and $\bB$ that have the same number of columns. The resulting matrix has the same number of columns as 
$\bA$ and $\bB$, but its number of rows is equal to the product of the number of rows of $\bA$ and that of $\bB$. 

Definition \ref{defn1} was originally introduced in \cite{mukhopadhyay1981construction}. While the original paper used the symbol $\oplus$ for the operation, we use $\biguplus$ in this paper to distinguish it from the Kronecker sum notation.
\begin{definition}\label{defn1}
Let $\boldsymbol{A}$ be an $n_1 \times m$ array and $\boldsymbol{B}$ be an $n_2 \times m$ array, both with entries from $\mathrm{GF}(s)$. The operation $\boldsymbol{A} \biguplus \boldsymbol{B}$ constructs an $(n_1 n_2) \times m$ matrix formed by the row-wise sum of $\boldsymbol{A}$ and $\boldsymbol{B}$. Specifically, for each row $\boldsymbol{u}_i$ of $\boldsymbol{A}$ ($i = 1, \dots, n_1$) and each row $\boldsymbol{w}_j$ of $\boldsymbol{B}$ ($j = 1, \dots, n_2$), the sum $\boldsymbol{u}_i + \boldsymbol{w}_j$ becomes a row of $\boldsymbol{A} \biguplus \boldsymbol{B}$.
\end{definition}

Let $\ba$, $\bc$ and $\be$ be a column vector of length $n_1$, and $\bb$, $\bd$  and $\bff$ be a column vector of length $n_2$, whose entries are from $GF(s)$. Lemmas~\ref{lem:2} and \ref{lem:3} provide the conditions 
an array constructed by Kronecker product to be  of strength 2 and strength 3, respectively. Both lemmas were given by \cite{2022He}. 

\begin{Lemma}\label{lem:2}
Given $\ba, \bb, \bc, \bd$ defined above, the array $({\ba \oplus \bb, \bc \oplus \bd})$ is an $OA(n_1n_2, 2, s, 2)$ if one of the following conditions is satisfied, and $\alpha\in GF(s)$,
\begin{itemize}
  \item[(i)]   $(\bb, \bd)$ is an $OA(n_2, 2, s, 2)$;
  \item[(ii)] $\bb$ is balanced, $\bd = \alpha\bb$, and   one of the following four conditions is satisfied:
                 $(1)$ $\alpha=1$ and $(\ba, \bc)$ is a $D(n_1, 2, s)$;
                 $(2)$ $(\ba,\bc)$ is an $OA(n_1,2, s, 2)$; $(3)$ $\bc=\bzero$, $\alpha\neq 0$, and $\ba$ is balanced;
                 and $(4)$ $\ba=\bc$, $\alpha\neq 1$, and $\ba$ is balanced; and
  \item[(iii)] $\bd = \bzero$, and one of the following two conditions is satisfied: $(1)$ $\bb$ and $\bc$ are
                 balanced; and $(2)$ $\bb=\bzero$ and $(\ba, \bc)$ is an $OA(n_1, 2, s, 2)$.
\end{itemize}
\end{Lemma}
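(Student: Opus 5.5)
We prove Lemma~\ref{lem:2} by a direct counting argument on the rows of the two-column array $(\ba\oplus\bb,\ \bc\oplus\bd)$. The rows are indexed by pairs $(i,j)$ with $1\le i\le n_1$ and $1\le j\le n_2$, and row $(i,j)$ equals $(a_i+b_j,\ c_i+d_j)$. For a fixed target pair $(x,y)\in GF(s)^2$ we must show that
\[
\#\{(i,j): a_i+b_j=x,\ c_i+d_j=y\}=n_1n_2/s^2 .
\]
The plan is to condition on $i$. For each fixed $i$, we count the $j$'s with $(b_j,d_j)=(x-a_i,\ y-c_i)$. Alternatively, we condition on $j$ and count the $i$'s with $(a_i,c_i)=(x-b_j,\ y-d_j)$. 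We then handle each case by choosing whichever conditioning makes the inner count constant.

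Case (i): for every $i$, the inner count over $j$ is $n_2/s^2$ because $(\bb,\bd)$ is an $OA(n_2,2,s,2)$. Summing over the $n_1$ values of $i$ gives $n_1n_2/s^2$.

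Case (ii): here $d_j=\alpha b_j$. The condition $b_j=x-a_i$ together with $\alpha b_j=y-c_i$ forces $\alpha(x-a_i)=y-c_i$. Balance of $\bb$ then gives $n_2/s$ choices of $j$ for each admissible $i$. So the count is
\[
(n_2/s)\cdot\#\{i:\ c_i-\alpha a_i=y-\alpha x\},
\]
and it suffices to show that $\bc-\alpha\ba$ is balanced, with each value occurring $n_1/s$ times.
\begin{itemize}
\item In subcase (1), $\alpha=1$ and $\bc-\ba$ is balanced by the definition of $D(n_1,2,s)$.
\item In subcase (2), any fixed linear combination $\bc-\alpha\ba$ of the two columns of a strength-2 OA is balanced.
\item In subcase (3), $\bc-\alpha\ba=-\alpha\ba$ with $\alpha\ne0$, which is balanced because $\ba$ is.
\item In subcase (4), $\bc-\alpha\ba=(1-\alpha)\ba$ with $1-\alpha\ne0$, again balanced.
\end{itemize}

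Case (iii): here $\bd=\bzero$, so the second coordinate is $c_i=y$. In subcase (1), $\bc$ balanced gives $n_1/s$ admissible $i$'s. For each such $i$, balance of $\bb$ gives $n_2/s$ values of $j$ with $b_j=x-a_i$, for a total of $n_1n_2/s^2$. In subcase (2), $\bb=\bzero$, so we need $(a_i,c_i)=(x,y)$. The OA property gives $n_1/s^2$ such $i$, each paired with any of the $n_2$ values of $j$.

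The main obstacle is minor. It lies in case (ii), in recognizing the reduction to balance of the single column $\bc-\alpha\ba$ and checking each of the four subcases, especially that $\alpha\ne1$ is exactly what subcase (4) requires. Divisibility issues such as $s^2\mid n_1$ are automatic from the hypotheses in each case.
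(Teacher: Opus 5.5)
Your proof is correct. Every subcase checks out:
\begin{itemize}
\item With the paper's convention for $\oplus$, row $(i,j)$ of the array is indeed $(a_i+b_j,\,c_i+d_j)$.
\item In case (ii), for each $i$ with $c_i-\alpha a_i=y-\alpha x$, the two defining equations collapse to the single condition $b_j=x-a_i$. So the reduction to balance of the one column $\bc-\alpha\ba$ is valid.
\end{itemize}

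The paper itself gives no proof of Lemma~\ref{lem:2}; it imports the result from \cite{2022He} and uses it only as a black box in the proof of Proposition~\ref{proposition1}. There is therefore no argument in the text to compare yours against.

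Your direct counting argument makes the appendix self-contained at this point. It also shows exactly where each hypothesis is used: $\alpha\neq 0$ in (ii)(3), $\alpha\neq 1$ in (ii)(4), and no condition at all on $\ba$ in (iii)(1).
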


\begin{Lemma}\label{lem:3}
Given $\ba, \bb, \bc, \bd, \be, \bff$ defined above, the array ($\ba\oplus\bb$, $\bc\oplus\bd$, $\be\oplus\bff$) is an $OA(n_1n_2, 3, s, 3)$ if one of the following conditions is satisfied, and $\alpha, \beta\in GF(s)$,
\begin{itemize}
  \item[(i)]  $(\bb, \bd,\bff)$ is an $OA(n_2, 3, s, 3)$;
  \item[(ii)]  $\bff= \alpha\bb$, $(\bb,\bd)=OA(n_2, 2, s, 2)$,
                 and $(\be, \alpha\ba)=D(n_1, 2, s)$; 
  \item[(iii)]  $ \bd=  \beta\bb$, $\bff= \alpha\bb$, $\bb$ is balanced, $(\ba, \bc)=OA(n_1, 2, s, 2)$
                 and one of the following three conditions is satisfied:
                 $(1)$ $\be=\ba$ and $\alpha\neq 1$; or $(2)$ $\be=\bzero$ and $\alpha\neq 0$; and (3) $(\ba,\bc,\be) =OA(n_1,3,s,3)$; and
  \item[(iv)] $\bb = \bd = \bzero$, $\bff$ is balanced, and $(\ba, \bc) = OA(n_1,2,s,2)$.
\end{itemize}
\end{Lemma}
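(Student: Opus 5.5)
The plan is a direct counting argument. The rows of $(\ba\oplus\bb,\bc\oplus\bd,\be\oplus\bff)$ are indexed by pairs $(i,j)$ with $1\le i\le n_1$ and $1\le j\le n_2$. Row $(i,j)$ equals $(a_i+b_j,\; c_i+d_j,\; e_i+f_j)$. Fix an arbitrary target $(x,y,z)\in GF(s)^3$ and let $\mathcal N(x,y,z)$ be the number of pairs $(i,j)$ satisfying
\[
a_i+b_j=x,\qquad c_i+d_j=y,\qquad e_i+f_j=z .
\]
In each case (i)--(iv) I will show $\mathcal N(x,y,z)=n_1n_2/s^3$. The method is always the same: use the hypotheses to eliminate one index, then count the other with the balance or orthogonality assumptions.

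\textbf{Case (i).} Fix $i$. The rows $j$ form the translate $(b_j,d_j,f_j)+(a_i,c_i,e_i)$ of an $OA(n_2,3,s,3)$. Translation permutes $GF(s)^3$, so it preserves strength three and contributes $n_2/s^3$ rows. Summing over $i$ gives $n_1n_2/s^3$.

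\textbf{Case (iv).} Here the first two coordinates are $(a_i,c_i)$. Exactly $n_1/s^2$ indices $i$ have $(a_i,c_i)=(x,y)$. For each such $i$, balance of $\bff$ gives $n_2/s$ indices $j$ with $f_j=z-e_i$.

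\textbf{Case (ii).} Fix $i$. The first two equations force $(b_j,d_j)=(x-a_i,\,y-c_i)$, which has $n_2/s^2$ solutions $j$. Substituting $\bff=\alpha\bb$, the third equation becomes
\[
e_i-\alpha a_i=z-\alpha x .
\]
This is a condition on $i$ alone. Since $(\be,\alpha\ba)$ is a $D(n_1,2,s)$, the difference $\be-\alpha\ba$ takes each value $n_1/s$ times. Hence $\mathcal N=(n_1/s)(n_2/s^2)$.

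\textbf{Case (iii).} Here the roles reverse: eliminate $i$ after fixing $u=b_j$. Balance of $\bb$ gives $n_2/s$ indices $j$ for each $u$. The system becomes
\[
a_i=x-u,\qquad c_i=y-\beta u,\qquad e_i=z-\alpha u .
\]
In subcase (3), $(\ba,\bc,\be)$ has strength three, so each $u$ contributes $n_1/s^3$ indices $i$. Summing over the $s$ values of $u$ gives $n_1n_2/s^3$. In subcase (1), $\be=\ba$ forces $(1-\alpha)u=z-x$. Since $\alpha\ne1$, this determines $u$ uniquely, and the $OA(n_1,2,s,2)$ property then gives $n_1/s^2$ indices $i$. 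In subcase (2), $\be=\bzero$ forces $\alpha u=z$. Since $\alpha\neq0$, $u$ is again unique, and the count is the same.

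The only delicate point is choosing which index to eliminate first: $j$ in cases (i), (ii), (iv), and $i$ in case (iii). In case (ii) one must also check that the third equation genuinely reduces to the difference condition on $(\be,\alpha\ba)$. I expect this to be the main obstacle, though it is a short verification once $b_j$ is substituted.
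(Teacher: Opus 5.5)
Your counting argument is correct in all four cases, and each hypothesis enters where it should:
\begin{itemize}
\item in (ii), the difference-scheme condition on $(\be,\alpha\ba)$ is used exactly through the reduction of the third equation to $e_i-\alpha a_i=z-\alpha x$;
\item in (iii)(1) and (iii)(2), the conditions $\alpha\neq 1$ and $\alpha\neq 0$ are exactly what make $u$ unique, so the count is $(n_2/s)(n_1/s^2)$;
\item in (iii)(3), summing $(n_2/s)(n_1/s^3)$ over the $s$ values of $u$ gives the right total.
\end{itemize}

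The paper does not prove this lemma. It imports it from \cite{2022He} together with Lemma~\ref{lem:2}, so there is no in-paper argument to compare against. Your direct count over index pairs $(i,j)$ is a self-contained, elementary verification of the cited result.

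One small slip: in subcase (iii)(1), setting $a_i=x-u$ equal to $e_i=z-\alpha u$ gives $(1-\alpha)u=x-z$, not $z-x$. This is harmless, because only the uniqueness of $u$ is used.
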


The following lemmas serve as important construction methods for orthogonal arrays. Lemma \ref{lem5} is due to Lemma 6.27 of \cite{hedayat1999orthogonal} and provides a method for constructing an orthogonal array of strength two from a difference scheme  and an orthogonal array both of strength two. Lemmas   \ref{lemt+1},
\ref{lem7}, \ref{lemDt}, and \ref{conststr3}  are Lemmas 4.1, 4.2, 4.3 and Theorem 4.1.2 in \cite{mukhopadhyay1981construction}. 

\begin{Lemma}\label{lem5}
    Suppose $\boldsymbol{A}$ is an $OA(n_1, m_2, s, 2)$ and $\boldsymbol{D}$ is a $D(n_2, m_2, s)$, 
    both based on the $GF(s)$, then the array $\boldsymbol{A}\oplus \boldsymbol{D}$ is an  $OA(n_1n_2, m_1m_2, s, 2)$.
\end{Lemma}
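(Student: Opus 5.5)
The plan is to verify strength two pair by pair, working directly from the definition of the Kronecker sum $\oplus$ in (\ref{kronecker}); no earlier lemma beyond the definitions is needed. (I read the statement as $\boldsymbol{A}=OA(n_1,m_1,s,2)$, the ``$m_2$'' in the hypothesis on $\boldsymbol{A}$ being a typo.)

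Index the rows of $\boldsymbol{A}\oplus\boldsymbol{D}$ by pairs $(u,v)$ with $1\le u\le n_1$ and $1\le v\le n_2$. Index the columns by pairs $(i,j)$ with $1\le i\le m_1$ and $1\le j\le m_2$. The entry in row $(u,v)$ and column $(i,j)$ is $a_{ui}+d_{vj}$. Clearly there are $n_1n_2$ runs and $m_1m_2$ columns. So it suffices to show that any two distinct columns $(i,j)$ and $(i',j')$ contain every ordered pair $(x,y)\in GF(s)^2$ exactly $n_1n_2/s^2$ times. I split into two cases according to whether $i\neq i'$ or $i=i'$.

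\emph{Case $i\neq i'$ ($j,j'$ arbitrary).} Fix $v$. As $u$ runs over the $n_1$ rows, the pair of entries is $(a_{ui},a_{ui'})+(d_{vj},d_{vj'})$. This is a fixed translate of the rows of the two-column subarray $(\boldsymbol{a}_i,\boldsymbol{a}_{i'})$ of $\boldsymbol{A}$, which is an $OA(n_1,2,s,2)$. Translation by a fixed vector is a bijection of $GF(s)^2$, so each pair occurs $n_1/s^2$ times within block $v$. Summing over the $n_2$ blocks gives $n_1n_2/s^2$ occurrences.

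\emph{Case $i=i'$, $j\neq j'$.} Here the difference scheme property is used. Write $x=a_{ui}$ and $\delta_v=d_{vj'}-d_{vj}$. The pair of entries in row $(u,v)$ is $(y,\,y+\delta_v)$, where $y=x+d_{vj}$. Fix $v$. Because $\boldsymbol{a}_i$ is balanced, with each symbol appearing $n_1/s$ times, the value $y$ also takes each symbol exactly $n_1/s$ times as $u$ varies. Hence a target pair $(y_0,z_0)$ occurs $n_1/s$ times in block $v$ if $\delta_v=z_0-y_0$, and $0$ times otherwise. Since $\boldsymbol{D}$ is a $D(n_2,m_2,s)$, the difference $\boldsymbol{d}_{j'}-\boldsymbol{d}_j$ contains each element of $GF(s)$ exactly $n_2/s$ times. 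So exactly $n_2/s$ blocks have $\delta_v=z_0-y_0$, and the total count is $(n_1/s)(n_2/s)=n_1n_2/s^2$.

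These two cases exhaust all pairs of distinct columns, which completes the argument. The only step needing any thought is the second case. There the key is the reparametrization $(u,v)\mapsto(y,\delta_v)$, which decouples the balance of $\boldsymbol{a}_i$ from the difference property of $\boldsymbol{D}$. The rest is bookkeeping.
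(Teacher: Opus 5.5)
Your proof is correct, and you are right that the $m_2$ in the hypothesis on $\boldsymbol{A}$ is a typo for $m_1$. The paper gives no argument of its own here and simply attributes the lemma to Lemma 6.27 of Hedayat, Sloane and Stufken; your two-case check is the standard argument behind that cited result. When $i\neq i'$, each block is a fixed translate of the $OA(n_1,2,s,2)$ pair $(\boldsymbol{a}_i,\boldsymbol{a}_{i'})$; when $i=i'$, the balance of $\boldsymbol{a}_i$ combines with the difference-scheme property of $(\boldsymbol{d}_j,\boldsymbol{d}_{j'})$.
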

\begin{Lemma} \label{lemDt}
    Suppose $\boldsymbol{B}$ is a $D_t(r, c, s)$ and a vector $\boldsymbol{v}=(\alpha_1, \ldots, \alpha_m)^\top$ with $m=qs$ for some integer $q$ such that each element of $GF(s)$ occur $q$ times as an entry in $\boldsymbol{v}$, then $\boldsymbol{v}\oplus \boldsymbol{B}$ is an $OA(qsr, c, s, t)$.
\end{Lemma}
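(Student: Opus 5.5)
We need to prove Lemma \ref{lemDt}: if $\bB$ is a $D_t(r,c,s)$ and $\boldsymbol{v}$ is a column of length $qs$ in which every element of $GF(s)$ appears $q$ times, then $\boldsymbol{v}\oplus\bB$ is an $OA(qsr,c,s,t)$.

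The plan is a direct counting argument over a fixed set of $t$ columns. First we describe the coset structure explicitly. In the definition of $D_t$, the natural choice making the $\mathcal{M}_i^t$ cosets is $\mathcal{M}_0^t=\{(x,\ldots,x):x\in GF(s)\}$, the diagonal subgroup of the additive group $GF(s)^t$. This matches the $D_3(8,7,2)$ example. We read "cosets of $\mathcal{M}_0^t$" in this sense, so the cosets are $\mathbf{w}+\mathcal{M}_0^t$ and there are exactly $s^{t-1}$ of them. If $\mathcal{M}_0^t$ were some other $s$-element subgroup, the argument below goes through verbatim with the diagonal vector $x\mathbf{1}$ replaced by the elements of that subgroup.

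Next we describe the rows. Fix $t$ columns $j_1,\ldots,j_t$ of $\bB$. Each row of $\boldsymbol{v}\oplus\bB$ restricted to these columns has the form $\alpha_k\mathbf{1}+\mathbf{w}_\ell$, where $\mathbf{w}_\ell$ is row $\ell$ of the $r\times t$ subarray of $\bB$, for $k=1,\ldots,qs$ and $\ell=1,\ldots,r$. Since each element of $GF(s)$ occurs $q$ times in $\boldsymbol{v}$, as $k$ ranges the multiset $\{\alpha_k\mathbf{1}+\mathbf{w}_\ell\}$ consists of every element of the coset $\mathbf{w}_\ell+\mathcal{M}_0^t$, each exactly $q$ times.

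Then we count. By the difference-scheme property, each of the $s^{t-1}$ cosets is represented by exactly $r/s^{t-1}$ of the rows $\mathbf{w}_\ell$. Hence any target $\mathbf{x}\in GF(s)^t$ lies in exactly one coset and is hit exactly $q\cdot r/s^{t-1}$ times, a count independent of $\mathbf{x}$. Therefore every level combination appears equally often, namely $qsr/s^t$ times. The run count is $qs\cdot r$ and the column count is $c$, so $\boldsymbol{v}\oplus\bB$ is an $OA(qsr,c,s,t)$.

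The main obstacle is the first step: pinning down that the cosets are additive translates of the diagonal, so that adding $\alpha\mathbf{1}$ sweeps each coset uniformly. Once that interpretation is fixed, the rest is routine double counting.
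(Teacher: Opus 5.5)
Your proof is correct. The paper gives no proof of Lemma~\ref{lemDt} of its own; it imports the result as Lemma~4.3 of Mukhopadhyay (1981). Your double-counting argument is the standard proof: for a fixed row $\mathbf{w}_\ell$ of the $r\times t$ subarray, the $qs$ shifts $\alpha_k\mathbf{1}+\mathbf{w}_\ell$ cover the coset $\mathbf{w}_\ell+\mathcal{M}_0^t$ exactly $q$ times each. Equal coset representation then gives every $\mathbf{x}\in GF(s)^t$ exactly $qr/s^{t-1}$ times.

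One correction to your setup. You are right that $\mathcal{M}_0^t$ must be read as the diagonal $\{x\mathbf{1}: x\in GF(s)\}$, and not as ``$s$ arbitrary elements'' as the paper literally says. That reading agrees with Hedayat et al.\ (1996), with the paper's $D_3(8,7,2)$ example, and with the stated reduction to ordinary difference schemes at $t=2$.

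However, your hedge that the argument goes through verbatim for any other subgroup of order $s$ is false. The construction $\boldsymbol{v}\oplus\bB$ only ever adds multiples of $\mathbf{1}$, so it sweeps diagonal cosets whatever $\mathcal{M}_0^t$ is. For example, take $s=t=2$ and $\mathcal{M}_0^2=\{(0,0),(1,0)\}$. The array $\bB$ with rows $(0,0)$ and $(1,1)$ represents each coset of this subgroup once. Yet $(0,1)^\top\oplus\bB$ has rows $(0,0),(1,1),(1,1),(0,0)$, so it is not of strength two. Drop that sentence; the diagonal reading is the one the lemma needs.
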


\begin{Lemma}\label{lem7}
    If $\boldsymbol{A}$ is an $OA(n, t, s, t)$ and $\boldsymbol{v} = (\alpha_1, \alpha_2, \ldots, \alpha_t)$ is any $t$-tuple on $GF(s)$, then $D = \boldsymbol{v} \biguplus \boldsymbol{A}$ is also an $OA(n,t,s,t)$.
\end{Lemma}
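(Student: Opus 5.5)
The statement to prove is Lemma~\ref{lem7}: for $\bA=OA(n,t,s,t)$ and a fixed $t$-tuple $\boldsymbol{v}$, the array $\boldsymbol{v}\biguplus\bA$ is again an $OA(n,t,s,t)$. Because $\boldsymbol{v}$ has a single row, Definition~\ref{defn1} makes $\boldsymbol{v}\biguplus\bA$ an $n\times t$ array. Its rows are exactly $\boldsymbol{v}+\boldsymbol{w}_j$, $j=1,\ldots,n$, where $\boldsymbol{w}_j$ ranges over the rows of $\bA$. The plan is therefore a direct counting argument based on the translation map $\tau_{\boldsymbol{v}}:GF(s)^t\to GF(s)^t$, $\boldsymbol{x}\mapsto \boldsymbol{x}+\boldsymbol{v}$.

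First, I would record that $\bA=OA(n,t,s,t)$ has only $t$ columns. Strength $t$ then means that every element of $GF(s)^t$ appears exactly $\lambda=n/s^t$ times among the rows $\boldsymbol{w}_1,\ldots,\boldsymbol{w}_n$.

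Second, I would use that $\tau_{\boldsymbol{v}}$ is a bijection of $GF(s)^t$ onto itself, with inverse $\tau_{-\boldsymbol{v}}$, since $GF(s)^t$ is an additive group. Fix any target $\boldsymbol{x}\in GF(s)^t$. The rows of $\boldsymbol{v}\biguplus\bA$ equal to $\boldsymbol{x}$ are exactly those indices $j$ with $\boldsymbol{w}_j=\boldsymbol{x}-\boldsymbol{v}$, and by the first step there are $\lambda$ such indices. So every level combination appears $\lambda$ times in the $n\times t$ array $\boldsymbol{v}\biguplus\bA$, which is the definition of an $OA(n,t,s,t)$.

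There is no real obstacle here. The only point needing care is interpreting $\biguplus$ correctly when the first operand is a single row, so that the run size stays $n$ rather than becoming $n_1n_2$ with $n_1=1$. This is consistent with Definition~\ref{defn1}. I would also note in passing that the same argument works for any subset of columns of a strength-$t$ array, which is how the lemma is later combined with Lemma~\ref{lemDt}. This gives the tool for showing that translated blocks $\boldsymbol{a}\oplus\bB$ retain strength in the proofs of Propositions~\ref{proposition4}--\ref{proposition6}.
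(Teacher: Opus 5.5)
Your proof is correct. Since $\boldsymbol{v}$ is a single row, $\boldsymbol{v}\biguplus\bA$ consists of the $n$ rows $\boldsymbol{v}+\boldsymbol{w}_j$. Because $\boldsymbol{x}\mapsto\boldsymbol{x}+\boldsymbol{v}$ is a bijection of $GF(s)^t$, each $t$-tuple $\boldsymbol{x}$ occurs in $\boldsymbol{v}\biguplus\bA$ exactly as often as $\boldsymbol{x}-\boldsymbol{v}$ occurs in $\bA$, namely $n/s^t$ times. The paper gives no argument for this lemma; it imports it as Lemma 4.2 of \cite{mukhopadhyay1981construction}. Your translation-counting argument is therefore a self-contained replacement for that citation rather than a competing route.

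Your closing remark about how the lemma is used is inaccurate, although this does not affect your proof. In the paper the lemma is explicitly invoked only in Case II of the proof of Proposition~\ref{proposition5}. There the second operand $[\boldsymbol{b}_i,\ldots,\boldsymbol{b}_i,\boldsymbol{b}_{m_2}]$ has $n_2$ rows, not one. So you would apply the lemma separately to each of those rows. You then use the elementary fact that stacking arrays of strength $t_1$ on top of one another gives an array of strength $t_1$. The lemma is not combined with Lemma~\ref{lemDt}. The strength of $\boldsymbol{a}_i\oplus\bB$ in Proposition~\ref{proposition4} comes from the difference-scheme property of $\bB$ through Lemma~\ref{lemDt}, not from translation invariance of an orthogonal array.
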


\begin{Lemma}\label{lemt+1}
    Let $\boldsymbol{A}$ be an $n \times t$ array, where the first $t-1$ columns form an $OA(n, t-1, s, t-1)$, and the $t$th column is identical to the $(t-1)$th column. Let $\boldsymbol{B}$ be an $m \times t$ array in which the last two columns constitute a $D(m, 2, s)$. Then, $\boldsymbol{A} \biguplus \boldsymbol{B}$ is an $OA(nm, t, s, t)$.
\end{Lemma}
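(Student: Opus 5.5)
The plan is a direct counting argument: for every target $t$-tuple, count the rows of $\bA \biguplus \bB$ equal to it and show the count is $nm/s^t$ regardless of the target. Write the rows of $\bA$ as $\boldsymbol{u}_i=(u_{i1},\ldots,u_{it})$, where by hypothesis $u_{it}=u_{i,t-1}$ for all $i$. Write the rows of $\bB$ as $\boldsymbol{w}_j=(w_{j1},\ldots,w_{jt})$. Fix an arbitrary target $\boldsymbol{x}=(x_1,\ldots,x_t)\in GF(s)^t$. We must count the pairs $(i,j)$ with $\boldsymbol{u}_i+\boldsymbol{w}_j=\boldsymbol{x}$.

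First fix $j$ and count the admissible $i$. The condition reads $u_{ik}=x_k-w_{jk}$ for $k=1,\ldots,t-1$ together with $u_{it}=x_t-w_{jt}$. Since $u_{it}=u_{i,t-1}$, the last equation can hold only if $x_{t-1}-w_{j,t-1}=x_t-w_{jt}$, that is,
\[
w_{jt}-w_{j,t-1}=x_t-x_{t-1}.
\]
Call this condition $(\ast)$.

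If $(\ast)$ fails for $j$, no $i$ works. If $(\ast)$ holds, the last equation is implied by the first $t-1$. Those equations prescribe one specific $(t-1)$-tuple in the first $t-1$ columns of $\bA$. Because those columns form an $OA(n,t-1,s,t-1)$, every $(t-1)$-tuple occurs exactly $n/s^{t-1}$ times, so exactly $n/s^{t-1}$ values of $i$ work.

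Next count the $j$ satisfying $(\ast)$. The last two columns of $\bB$ form a $D(m,2,s)$, so their difference takes each element of $GF(s)$ exactly $m/s$ times. In particular the value $x_t-x_{t-1}$ is taken for exactly $m/s$ rows $j$.

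Multiplying the two counts, the target $\boldsymbol{x}$ appears exactly $(m/s)\cdot(n/s^{t-1})=nm/s^t$ times in $\bA\biguplus\bB$. This number does not depend on $\boldsymbol{x}$. Since $\bA\biguplus\bB$ has exactly $t$ columns, this uniform count over all of $GF(s)^t$ is precisely the statement that it is an $OA(nm,t,s,t)$.

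There is no real obstacle in this argument. The only point needing care is that the duplicated column of $\bA$ turns the $t$-th coordinate equation into the difference condition $(\ast)$ on $\bB$. That is exactly the point at which the difference-scheme property of the last two columns of $\bB$ is used, in place of any orthogonality property of $\bB$.
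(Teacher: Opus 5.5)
Your proof is correct and complete. The paper gives no proof of this lemma; it is quoted without proof from \cite{mukhopadhyay1981construction}, so there is no in-paper argument to compare against, and your direct count is a valid self-contained proof.

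Read blockwise, your argument says the following. For a fixed row $\boldsymbol{w}_j$ of $\bB$, the rows $\boldsymbol{u}_i+\boldsymbol{w}_j$ cover each $t$-tuple $\boldsymbol{x}$ with $x_t-x_{t-1}=w_{jt}-w_{j,t-1}$ exactly $n/s^{t-1}$ times, and they cover no other tuple. The $D(m,2,s)$ property of the last two columns then makes the $m$ blocks hit each of the $s$ possible values of that difference exactly $m/s$ times.

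Three points that could be questioned are all fine:
\begin{itemize}
\item Nothing is needed about the first $t-2$ columns of $\bB$, which matches the hypotheses.
\item The direction of the column difference is immaterial, since negation permutes $GF(s)$.
\item The integrality of $n/s^{t-1}$ and $m/s$ is guaranteed by the assumptions.
\end{itemize}
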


\begin{Lemma}\label{conststr3}
Let \( \boldsymbol{A} = OA(N_1, r_1, s, 3) \) and \( \boldsymbol{B} = D_3(N_2, r_2, s) \). Then, \(\boldsymbol{A} \oplus\boldsymbol{B}\) is an   $OA(N_1N_2,$ $ r_1 r_2, s, 3)$.
\end{Lemma}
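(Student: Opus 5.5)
The plan is to verify the strength-three condition directly, one triple of columns at a time. Index the rows of $\bA\oplus\bB$ by pairs $(u,v)$ with $u=1,\ldots,N_1$ and $v=1,\ldots,N_2$, and the columns by pairs $(i,j)$. The entry in row $(u,v)$ and column $(i,j)$ is $a_{ui}+b_{vj}$; it comes from the block in row $u$ and column $i$ of the Kronecker product. Take three distinct columns $(i_1,j_1),(i_2,j_2),(i_3,j_3)$. I will show that every triple $(\alpha,\beta,\gamma)\in GF(s)^3$ appears exactly $N_1N_2/s^3$ times. I split into cases according to how many of the indices $i_1,i_2,i_3$ coincide.

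\emph{Case 1: $i_1,i_2,i_3$ are distinct.} Fix $v$ and put $x_k=b_{vj_k}$. As $u$ varies, the rows are $(a_{ui_1}+x_1,\,a_{ui_2}+x_2,\,a_{ui_3}+x_3)$. This is a translate of the projection of $\bA$ onto three columns, which is an $OA(N_1,3,s,3)$. By Lemma~\ref{lem7}, or simply because translation is a bijection on $GF(s)^3$, each triple appears $N_1/s^3$ times for every fixed $v$. Summing over $v$ gives $N_1N_2/s^3$.

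\emph{Case 2: $i_1=i_2=i_3=i$.} Then $j_1,j_2,j_3$ are distinct, and the three columns form $\ba_i\oplus(\bb_{j_1},\bb_{j_2},\bb_{j_3})$. Since $\bA$ has strength $3\ge1$, the column $\ba_i$ contains each element of $GF(s)$ exactly $q=N_1/s$ times. The three columns of $\bB$ form a $D_3(N_2,3,s)$. Lemma~\ref{lemDt} with $\boldsymbol{v}=\ba_i$ then gives an $OA(N_1N_2,3,s,3)$. Only the row order differs from the lemma's statement, and row order is irrelevant.

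\emph{Case 3: exactly two indices coincide.} Say $i_1=i_2\neq i_3$, which forces $j_1\ne j_2$. Write $\ba=\ba_{i_1}$ and $\bc=\ba_{i_3}$. Fix $v$ and put $x_k=b_{vj_k}$. As $u$ varies, the row is $(a_u+x_1,\,a_u+x_2,\,c_u+x_3)$. The pair $(\ba,\bc)$ has strength two, so this row equals $(\alpha,\alpha+(x_2-x_1),\gamma)$ exactly $N_1/s^2$ times for each pair $(\alpha,\gamma)$, and otherwise never occurs. Hence the count of a target $(\alpha,\beta,\gamma)$ equals $(N_1/s^2)$ times the number of $v$ with $b_{vj_2}-b_{vj_1}=\beta-\alpha$. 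It therefore suffices to show that every difference value occurs $N_2/s$ times, i.e.\ that $(\bb_{j_1},\bb_{j_2})$ is a $D(N_2,2,s)$. This gives $(N_1/s^2)(N_2/s)=N_1N_2/s^3$.

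The main point needing care is this last step: a $D_3$ is a $D_2$ on every pair of columns. I would take $\mathcal{M}_0^3=\{(c,c,c):c\in GF(s)\}$, as in the paper's example. Its cosets are exactly the classes of the relation that identifies two triples differing by a constant vector. The map $(x,y,z)\mapsto y-x$ is constant on each coset, and each difference value corresponds to exactly $s$ cosets. Choosing any third column $\bb_{j}$ with $j\ne j_1,j_2$, equal representation of the $s^2$ cosets then yields $N_2/s$ occurrences of each difference. If $r_2=2$ no third column exists, and the statement should be read with $\bB$ assumed to be a difference scheme of strength two in the ordinary sense. The subcases $i_1=i_3$ and $i_2=i_3$ follow by symmetry, which completes the proof.
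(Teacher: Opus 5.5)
Your proof is correct. The paper gives no argument of its own for this lemma; it only quotes it as Theorem 4.1.2 of \cite{mukhopadhyay1981construction}. Your split on the pattern of $i_1,i_2,i_3$ is essentially the route supported by the companion lemmas the paper quotes alongside it: Case 1 is Lemma~\ref{lem7}, Case 2 is Lemma~\ref{lemDt}, and Case 3 is an inline proof of Lemma~\ref{lemt+1} for $t=3$. Your check that any two columns of a $D_3$ (with $\mathcal{M}_0^3$ the diagonal subgroup) form a $D(N_2,2,s)$ supplies exactly the hypothesis that Lemma~\ref{lemt+1} needs. Your caveat about $r_2=2$, where the strength-three condition is vacuous, is well taken.
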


\subsection*{Proof of Proposition \ref{proposition1}}
\begin{proof}
The design $\textbf{D}$ is an orthogonal array of strength two following by Lemma~\ref{lem:2}. 
To show $\textbf{E}_k$ is an 
$OA(n_1n_2,4m_2,s,3)$ or an $OA(n_1n_2,2m_2,s,3)$ or an 
$OA(n_1n_2,m_2,s,3)$, we examine every three columns of $\textbf{E}_k$. Without loss of generality, nine possible cases are considered for generating the three columns of $\textbf{E}_k$. They are, (1) $(\ba_{i_1} \oplus \bb_{j_1}, \ba_{i_1} \oplus \bb_{j_2}, \ba_{i_1} \oplus \bb_{j_3})$; (2) $(\ba_{i_1} \oplus \bb_{j_1}, \ba_{i_1} \oplus \bb_{j_2}, \ba_{i_1} \oplus \alpha \bb_{j_1})$; (3) $(\ba_{i_1} \oplus \bb_{j_1}, \ba_{i_2} \oplus \bb_{j_1}, \ba_{i_1} \oplus \bb_{j_2})$; (4) $(\ba_{i_1} \oplus \bb_{j_1}, \ba_{i_2} \oplus \bb_{j_1}, \ba_{i_1} \oplus \alpha \bb_{j_2})$; (5) $(\ba_{i_1} \oplus \bb_{j_1}, \ba_{i_2} \oplus \bb_{j_2}, \ba_{i_1} \oplus \bb_{j_2})$; (6) $(\ba_{i_1} \oplus \bb_{j_1}, \ba_{i_2} \oplus \bb_{j_1}, \ba_{i_3} \oplus \bb_{j_1})$;
 (7) $(\ba_{i_1} \oplus \bb_{j_1}, \ba_{i_2} \oplus \bb_{j_1}, \ba_{i_3} \oplus \bb_{j_2})$; (8) $(\ba_{i_1} \oplus \bb_{j_1}, \ba_{i_2} \oplus \bb_{j_1}, \ba_{i_3} \oplus \alpha \bb_{j_1})$;
 (9) $(\ba_{i_1} \oplus \bb_{j_1}, \ba_{i_2} \oplus \bb_{j_1}, \ba_{i_3} \oplus \alpha \bb_{j_2})$.  Here $\ba_{i_1}$, $\ba_{i_2}$ and $\ba_{i_3}$ are three columns in 
$\bA^{(i_1)},$
$\bA^{(i_2)}$ and
$\bA^{(i_3)},$
 respectively,  while
 $\bb_{j_1},  \bb_{j_2}$ and $ \bb_{j_3}$ are the $j_1$th, $j_2$th, and $j_3$th columns of $\textbf{B}$ for $i_1 \neq i_2 \neq i_3$, and $j_1 \neq j_2 \neq j_3$, and $\alpha \in GF(s)$  \textbackslash  \{1\}. All these nine cases lead to an orthogonal array of strength three followed by Lemma~\ref{lem:3}. 
\end{proof}

\subsection*{Proof of Proposition \ref{proposition2}}

\begin{proof} 
Proposition \ref{proposition2} follows directly from the proof of Proposition \ref{proposition1} with $\bB$ being a strength-two orthogonal array with two columns, and hence the proof is omitted.
\end{proof}

\subsection*{Proof of Proposition \ref{proposition3}}
\begin{proof} 
Proposition \ref{proposition3} follows directly from the proof Proposition \ref{proposition1} with $\bB$ being a strength-three orthogonal array with more than three columns, and hence the proof is omitted.
\end{proof}

 \subsection*{Proof of Proposition \ref{proposition4}}
\begin{proof}
 By Lemma~\ref{conststr3}, when $t_1 = 3$ and $t_2 \geq 3$, the array $\boldsymbol{A} \oplus \boldsymbol{B}$ forms an orthogonal array of strength three.
    When $t_1 = 2$ and $t_2 \geq 3$, $\boldsymbol{A} \oplus \boldsymbol{B}$ becomes an orthogonal array of strength two, as proven in Lemma~\ref{lem5}. 
Furthermore, each group  $\boldsymbol{E}_i = \boldsymbol{a}_i \oplus \boldsymbol{B}$ of  $\boldsymbol{E}$ constitutes an orthogonal array of strength $t_2$ according to Lemma~\ref{lemDt}.
\end{proof}

 \subsection*{Proof of Proposition \ref{proposition5}}

\begin{proof}
   From Construction~\ref{construction5}, each group of \(\boldsymbol{E}\) is given by
\begin{align}\label{p3.6_1}
    \boldsymbol{E}_{i} = [\boldsymbol{A} \oplus \boldsymbol{b}_{i},\; \boldsymbol{a}_{i} \oplus \boldsymbol{b}_{m_2}]
\end{align}
for \(1 \leq i \leq \min(m_1, m_2 - 1)\). We claim that each \(\boldsymbol{E}_{i}\) is an orthogonal array of strength \(t_1\), and in the special case when \(m_1 = t_1\), \(\boldsymbol{E}_{i}\) is an orthogonal array of strength \(t_1 + 1\).

To prove these claims, we evaluate  \eqref{p3.6_1} as follows,
    \begin{align*}
         \boldsymbol{E}_{i} &= \left[\boldsymbol{A} \oplus \boldsymbol{b}_{i}, \boldsymbol{a}_{i} \oplus \boldsymbol{b}_{m_2}\right] 
        = \left[\boldsymbol{a}_1 \oplus \boldsymbol{b}_{i}, \ldots, \boldsymbol{a}_{m_1} \oplus \boldsymbol{b}_{i}, \boldsymbol{a}_{i} \oplus \boldsymbol{b}_{m_2}\right] \\
        &= \left[\boldsymbol{a}_1, \ldots, \boldsymbol{a}_{m_1}, \boldsymbol{a}_{i}\right] \biguplus \left[\boldsymbol{b}_{i}, \ldots, \boldsymbol{b}_{i}, \boldsymbol{b}_{m_2}\right].
    \end{align*}
       
    Let \( N = n_1 n_2 \), and let \(\boldsymbol{E}_{\text{sub}}\) denote an arbitrary \( N \times t_1 \) subarray of \(\boldsymbol{E}_{i}\). To show that \(\boldsymbol{E}_i\) is an orthogonal array of strength \( t_1 \), it suffices to show that any such \(\boldsymbol{E}_{\text{sub}}\) has strength \( t_1 \). We consider the following two cases:

    \textbf{Case I}:
    All columns of \( \boldsymbol{E}_{\text{sub}} \) are among the first \( m_1 \) columns of \(  \boldsymbol{E}_{i} \).  
    In this case,  
    \begin{align*}
        \boldsymbol{E}_{\text{sub}} = [\boldsymbol{a}_{j_1}, \ldots, \boldsymbol{a}_{j_{t_1}}] \biguplus [\boldsymbol{b}_{i}, \ldots, \boldsymbol{b}_{i}] 
        = [\boldsymbol{a}_{j_1}, \ldots, \boldsymbol{a}_{j_{t_1}}] \oplus \boldsymbol{b}_{i}
    \end{align*}  
    for \( 1 \leq j_1 < \cdots < j_{t_1} \leq m_1 \). Since \( [\boldsymbol{a}_{j_1}, \ldots, \boldsymbol{a}_{j_{t_1}}] \) consists of \( t_1 \) columns from \( \boldsymbol{A} \), where \( \boldsymbol{A} \) is an \( OA(n_1, m_1, s, t_1) \), the submatrix \( [\boldsymbol{a}_{j_1}, \ldots, \boldsymbol{a}_{j_{t_1}}] \) forms an \( OA(n_1, t_1, s, t_1) \), i.e., an orthogonal array of strength \( t_1 \). It follows that \( \boldsymbol{E}_{\text{sub}} \), which consists of replicates of this \( OA(n_1, t_1, s, t_1) \), is itself an orthogonal array of strength \( t_1 \).

\textbf{Case II}:  \( \boldsymbol{E}_{\text{sub}} \) contains the last column of \( \boldsymbol{E}_{i} \). 
    Here,  
    \begin{align*}
        { \boldsymbol{E}_{\text{sub}}} = [\boldsymbol{a}_{j_1}, \ldots, \boldsymbol{a}_{j_{t_1-1}}, \boldsymbol{a}_{i}] \biguplus [\boldsymbol{b}_{i}, \ldots, \boldsymbol{b}_{i}, \boldsymbol{b}_{m_2}]
    \end{align*}
    for \( 1 \leq j_1 < \cdots < j_{t_1-1} \leq k \) and \( 1 \leq i \leq \min(m_1, m_2-1) \).  

    If \( i \in \{j_1, \ldots, j_{t_1-1}\} \), then by Lemma \ref{lemt+1}, { \( \boldsymbol{E}_{\text{sub}} \)} is of strength \( t_1 \).  
   If \( i \notin \{ j_1, \ldots, j_{t_1 - 1} \} \), then the submatrix \( [\boldsymbol{a}_{j_1}, \ldots, \boldsymbol{a}_{j_{t_1 - 1}}, \boldsymbol{a}_{i}] \) consists of \( t_1 \) columns of \( \boldsymbol{A} \), where \( \boldsymbol{A} \) is an \( OA(n_1, m_1, s, t_1) \). Therefore, \( [\boldsymbol{a}_{j_1}, \ldots, \boldsymbol{a}_{j_{t_1 - 1}}, \boldsymbol{a}_{i}] \) itself is an \( OA(n_1, t_1, s, t_1) \), i.e., an orthogonal array of strength \( t_1 \). By Lemma~\ref{lem7}, it follows that { \( \boldsymbol{E}_{\text{sub}} \)} is also of strength \( t_1 \).

    Finally, for the special case when \( m_1 = t_1 \), an arbitrary \( N \times (t_1+1) \) subarray of { \( \boldsymbol{E}_{i} \)} is given by  
    \begin{align*}
        [\boldsymbol{a}_1, \ldots, \boldsymbol{a}_{t_1}, \boldsymbol{a}_{i}] \biguplus [\boldsymbol{b}_{i}, \ldots, \boldsymbol{b}_{i}, \boldsymbol{b}_{m_2}].
    \end{align*} 
    This satisfies the conditions stated in Lemma~\ref{lemt+1}, which guarantees that the resulting design is an orthogonal array of strength \( t_1 + 1 \).

{ Note that $t_1\geq 3.$} By Lemma~\ref{conststr3}, if { \( t_2=3 \)}, then \( \boldsymbol{A} \oplus \boldsymbol{B} \) is an orthogonal array of strength 3. By Lemma~\ref{lem5}, if   { \( t_2 = 2 \),} then \( \boldsymbol{A} \oplus \boldsymbol{B} \) is an orthogonal array of strength 2.
{ 
All columns of $\boldsymbol{E}$ are drawn from those of \( \boldsymbol{A} \oplus \boldsymbol{B} \), and thus $\boldsymbol{E}$ is also an orthogonal array of strength 2.
This completes the proof.}
\end{proof}

\subsection*{Proof of Proposition \ref{proposition6}}
\begin{proof}
From Lemma~\ref{conststr3}, $\boldsymbol{A}_i$ is an orthogonal array of strength 3 and $\boldsymbol{B}$ is a difference scheme of strength 3 implies that $\boldsymbol{A}_i \oplus \boldsymbol{B}$ is an orthogonal array of strength 3. Furthermore, by Lemma~\ref{lem5}, the array $A \oplus B$ is an orthogonal array of strength 2.
\end{proof}

\par

\section*{Acknowledgments}

Wang was supported by the National Natural Science Foundation of China (Grant No.
12301323). Liu and Lin were supported by a Discovery grant from the Natural Sciences and Engineering Research Council of Canada. Chen was supported by the National Natural Science Foundation of China (Grant No. 12401325).
\par


\bibliographystyle{chicago}   
\bibliography{newbibfile} 

\end{document}